\font\msbm=msbm10
\numberwithin{equation}{section}
\theoremstyle{plain}
\newtheorem{theorem}{Theorem}[section]
\newtheorem{lemma}[theorem]{Lemma}
\newtheorem{definition}{Definition}[section]
\newtheorem{remark}[theorem]{Remark}
\def\mathbb#1{\hbox{\msbm{#1}}}
\newcommand{\N}{{\mathbb{N}}}
\newcommand{\R}{{\mathbb{R}}}
\newcommand{\Z}{{\mathbb{Z}}}
\newcommand{\C}{{\mathbb{C}}}
\renewcommand{\P}{{\mathbb{P}}}
\newcommand{\E}{{\mathbb{E}}}
\newcommand{\sgn}{\operatorname{sgn}}
\newcommand{\Qpar}{\mathcal{Q}}
\newcommand{\Rpar}{\mathcal{R}}
\newcommand{\Id}{\operatorname{Id}}
\newcommand{\beq}{\begin{eqnarray}}
\newcommand{\eeq}{\end{eqnarray}}
\newcommand{\beqn}{\begin{eqnarray*}}
\newcommand{\eeqn}{\end{eqnarray*}}
\newcommand{\ol}{\overline}
\DeclareMathOperator*{\diag}{diag}
\renewcommand{\diag}{\operatorname{diag}}
\DeclareMathOperator*{\sign}{sgn}
\DeclareMathOperator*{\argmin}{argmin}
\renewcommand{\Re}{\operatorname{Re}}
\begin{document}
\title{Remote sensing via $\ell_1$-minimization}

\date{May 4, 2012; revised April 24, 2013}

\author{Max H\"ugel\thanks{Hausdorff Center for Mathematics and Institute 
for Numerical Simulation, University of Bonn, Bonn, Germany, {\tt max.huegel@hcm.uni-bonn.de}} \and
Holger Rauhut\thanks{Hausdorff Center for Mathematics and Institute for 
Numerical Simulation, University of Bonn, Bonn, Germany, {\tt rauhut@hcm.uni-bonn.de}}  \and
Thomas Strohmer\thanks{Department of Mathematics, University of 
California at Davis, Davis CA, {\tt strohmer@math.ucdavis.edu}}}
\maketitle
\begin{abstract} 
We consider the problem of detecting the locations of targets in 
the far field by sending probing signals from an antenna array and recording
the reflected echoes. Drawing on key concepts from the area of compressive sensing, we use an $\ell_1$-based regularization approach to solve
this, in general ill-posed, inverse scattering problem. As common in compressive sensing, we exploit randomness, which in this
context comes from choosing the antenna locations at random. With $n$ antennas we obtain $n^2$ measurements
of a vector $x \in \C^{N}$ representing the target locations and
reflectivities on a discretized grid. It is common to assume that the scene $x$ is sparse due to a limited
number of targets. Under a natural condition on the mesh size of the grid,
we show that an $s$-sparse scene can be recovered via $\ell_1$-minimization
with high probability if $n^2 \geq C s \log^2(N)$. The reconstruction is stable under noise and under passing from sparse to approximately sparse vectors.
Our theoretical findings are confirmed by numerical simulations.

\medskip
\noindent
{\bf AMS Subject Classification:} 65K05, 65C99, 65F22, 94A99, 90C25

\medskip
\noindent
{\bf Keywords:} Compressive sensing, sparsity, $\ell_1$-minimization,
inverse scattering, regularization
\end{abstract}

\section{Introduction}
\label{sec:intro}
Our aim is 
to detect the locations and reflectivities of remote targets (point 
scatterers) by sending probing signals from an antenna array and recording 
the reflected signals. This type of inverse scattering --- which has 
applications in radar, sonar, medical imaging, and microscopy --- is a rather 
challenging numerical problem. 
Typically the solution is not unique and instabilities in the presence 
of noise are a common issue. Standard techniques, such as matched field
processing~\cite{Tol93} or time reversal methods~\cite{BPT03,DMG05,JMO07} work well 
for the detection of very few, well separated targets. However, when the 
number of targets increases and/or some targets are adjacent to each other, 
these methods run into severe problems. Moreover, these methods have major
difficulties when the dynamic range between the reflectivities of the
targets is large.

In~\cite{fayastro09} a compressive sensing based approach to the inverse
scattering problem was proposed to overcome the ill-posedness of the
problem by utilizing the sparsity of the target scene. Here, sparsity
is meant in the sense that the targets typically occupy only a small fraction 
of the overall 
region of interest. As common in compressive sensing \cite{do06-2,carota06,fora11,ra09}, 
randomness is used and in this setup it is realized by
placing the antennas at random locations on a square.
It was proved in~\cite{fayastro09} that under certain
conditions it is possible to exactly recover the locations and reflectivities 
of the targets from noise-free measurements by solving an
$\ell_1$-regularized optimization problem, also known as {\em basis pursuit} 
in the compressive sensing literature. 

\begin{figure}[t]
\begin{center}
  \subfigure[]{\includegraphics[width=15cm,height=6cm]{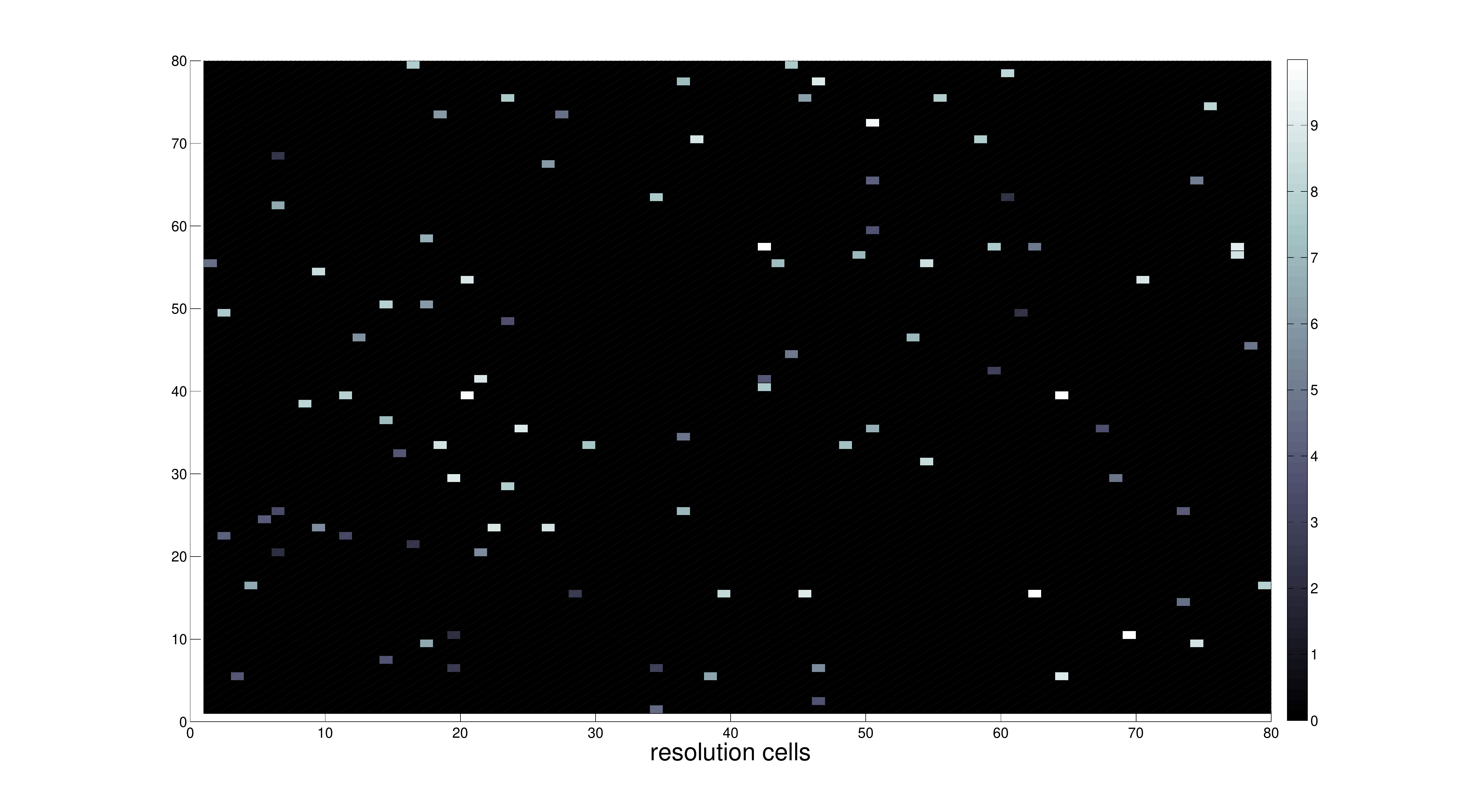}}
\subfigure[]{\includegraphics[width=15cm,height=6cm]{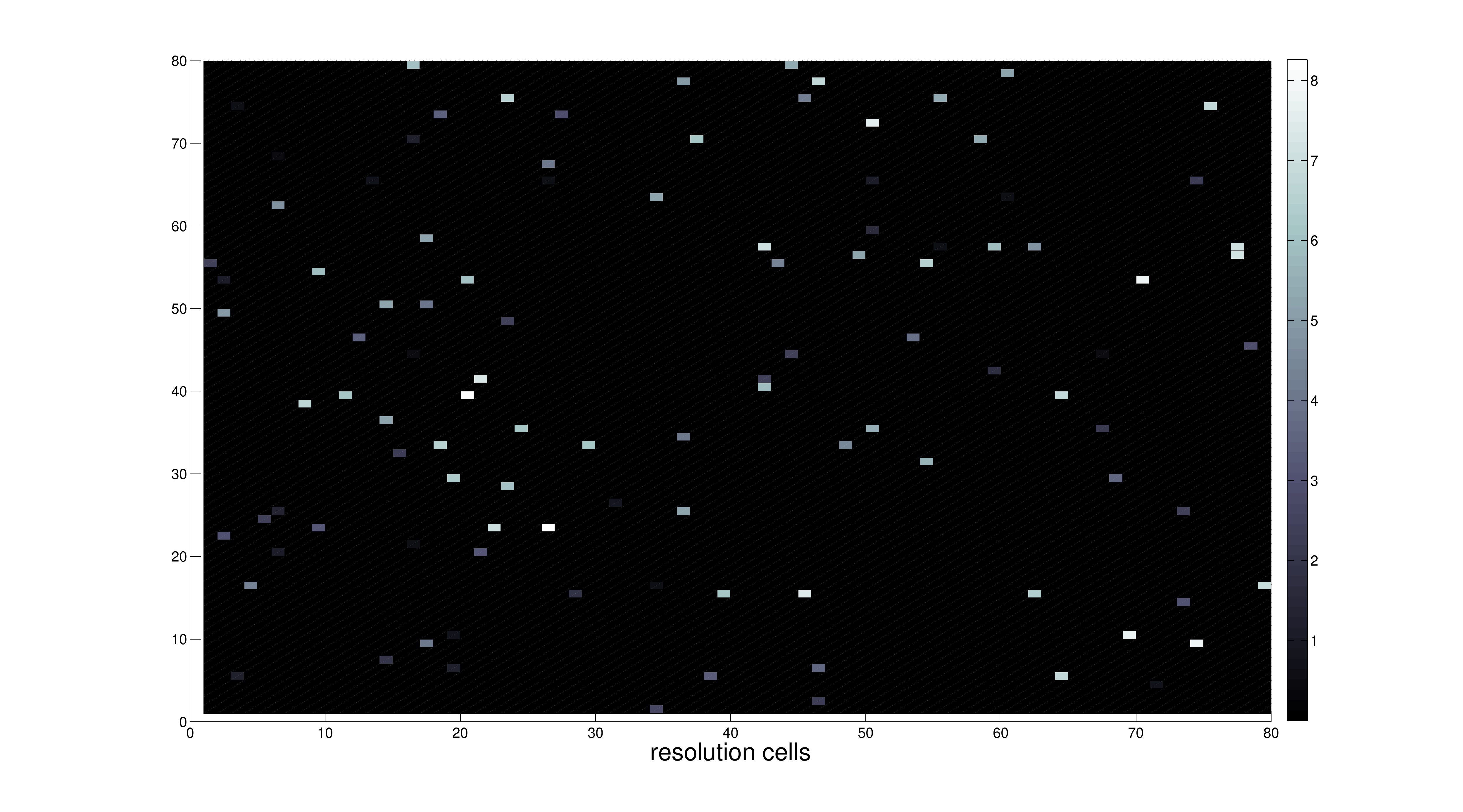}}
\end{center}
     \caption{(a) Scene with $100$ targets in $6400$ resolution cells (b)
Reconstruction from $900$ noisy measurements with SNR of $20$dB}
     \label{fig:examples:scene}
\end{figure}

While the framework in~\cite{fayastro09} can lead to significant improvements 
over traditional methods, it also has several limitations. For instance, the 
main theoretical result in that article requires the targets to be
randomly spaced, a condition
that is quite restrictive and does not match well with practical scenarios.
Also the conditions on the number of targets that can be recovered are
far from optimal. In this paper we will overcome most of these limitations, 
thus leading to a theoretical framework that is better adapted to 
practical applications. In particular, we also show that recovery is stable with respect to measurement noise
and under passing from sparse to approximately sparse scenes. Figure
\ref{fig:examples:scene} depicts the reconstruction of a sparse scene of
$100$ targets in $6400$ resolution cells with reflectivities in the dynamic range from $1$ to $8$ from $900$ noisy measurements, that is with $30$ antennas. Both the detection performance and the approximation of the true values of the reflectivities are very good.


What makes the inverse scattering problem
with antenna arrays challenging from a compressive sensing viewpoint is
that the associated sensing matrix is not a random matrix with independent 
rows or columns, but the matrix entries are random variables which are
coupled across rows and columns. This in turn means that standard proof 
techniques from the compressive sensing literature cannot be applied readily
and results developed for structured sensing matrices~\cite{ra09} 
are of limited use in our case. In fact, it is an open problem whether the by now 
classical and often used restricted isometry property holds for the random 
scattering matrix arising in our context. Instead we provide high probability 
recovery bounds for a fixed 
vector and a random choice of the scattering matrix (also referred to as nonuniform recovery guarantees).
We believe that some of the tools that we
develop in this paper will potentially be useful in other compressive sensing
scenarios, where the sensing matrix has coupled rows and columns.

Our paper is organized as follows. In Section~\ref{sec:random_sign} 
we describe the setup of the imaging problem and state our main results.
As preparation for proving our main theorems, we derive a general sparse 
recovery result in Section~\ref{sec:general} and condition number estimates
for certain random matrices in Section~\ref{sec:conditionnumbers}. In
Section~\ref{sec:nonuniform} we prove the recovery of sparse vectors for sensing matrices 
with dependent rows and columns which are associated with a class 
of bounded orthonormal systems. This type of matrices includes
the sensing matrix arising in the inverse scattering problem as a special
case. On the other hand this result assumes that the non-zero coefficients
of the signal to be recovered have random phases. In 
Section~\ref{sec:deterministic} we remove the assumption of random phases and show sparse recovery for the inverse scattering setup for signals with fixed deterministic phases. In Section~\ref{sec:numerics} we illustrate our
theoretical results by numerical simulations.

\subsection*{Acknowledgements}

M.H.\ and H.R.\ acknowledge support by the Hausdorff Center for Mathematics
and by the ERC Starting Grant SPALORA StG 258926. T.S.\ was supported by 
the National Science Foundation and DTRA under grant DTRA-DMS 1042939,  and by DARPA 
under grant N66001-11-1-4090. Parts of this manuscript have been written 
during a stay of H.R.\ at the Institute for Mathematics and Its Applications, 
University of Minnesota, Minneapolis. T.S.\ thanks Haichao Wang for useful
comments on an early version of this manuscript. The authors also wish to thank Axel Obermeier and the anonymous reviewers for helpful comments and corrections.

\section{Problem formulation and main results}
\label{sec:random_sign}

\begin{subsection}{Array imaging setup and problem formulation}\label{radar_setup}

Suppose an array of $n$ transducers is located in 
the square $[0,B]^2$, where $B>0$ is the array aperture.
The spatial part of a wave of wavelength $\lambda>0$ emitted from some point source $b\in[0,B]^2$ and recorded at another point $r\in\R^3$ is given by the 
Green's function $G$ of the Helmholtz equation,
\begin{equation}\label{eqnRadar1}
G(r,b):=\frac{\exp\left(\frac{2\pi i}{\lambda}\left\|r-b\right\|_2\right)}{4\pi\left\|r-b\right\|_2}.
\end{equation}
Here and in the following $\|\cdot\|_p$ refers to the usual $\ell_p$-norm.

Assume that we want to image the locations of targets which are at distance
$z_0>0$. For the analysis, we make the idealizing assumption that the
targets are on a discretized grid of meshsize $d_0>0$ in the target domain $TD:=[-L,L]^2\times\left\{z_0\right\}$, 
where $L>0$ determines the size of the target domain. To be more precise, let us assume that each target occupies 
one of the points $\left(r_j\right)_{j\in [N]}\subset TD$, where $[N] := \{1,\hdots,N\}$ with $N=\left\lfloor 2L/d_0\right\rfloor^2$ 
and each $r_j$ is of the form $r_j=\left(-L+kd_0,-L+\ell d_0,z_0\right)^T$ for some $(k,\ell) \in [\sqrt{N}]^2$. 
See also Figure \ref{figRadar_setup} 
for a visualization of this setup.

\begin{figure}[t]
\centering
   \includegraphics[width=15cm,height=6cm]{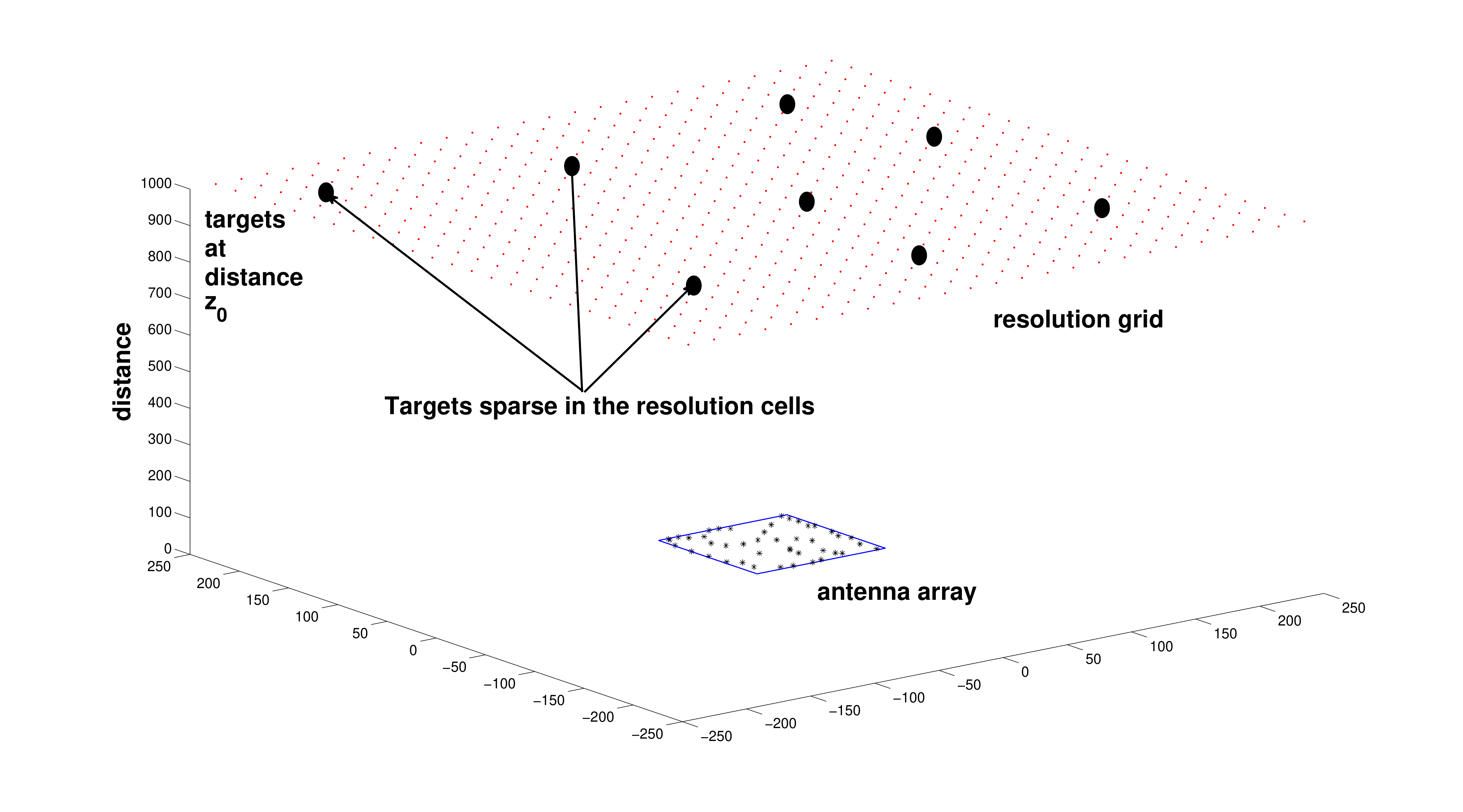}
   \caption{The targets at distance $z_0$ distributed sparsely in the
target domain}
\label{figRadar_setup}
\end{figure}

In order to be able to analyze the arising sensing mechanism, we approximate the Green's function from (\ref{eqnRadar1}) in an adequate way. 
To this end, we assume to be in the far field region, that is, the distance $z_0$ from antenna to target satisfies $z_0\gg B+L$. 
Writing $r=\left(x,y,z_0\right)^T$ and $b=\left(\xi,\eta,0\right)^T$, 
the truncated Taylor expansion for $\left\|r-b\right\|_2$ around $r_0:=\left(\xi,\eta,z_0\right)$ is given by
\begin{equation}\label{eqnRadar2}
\left\|r-b\right\|_2\approx z_0+\frac{\left\|(x,y)-(\xi,\eta)\right\|_2^2}{2z_0}.
\end{equation}
Under the far field assumption
we obtain then that
\begin{equation}\label{eqnRadar3}
G(r,b)\approx\widetilde{G}(r,b):=\exp\left(\frac{2\pi iz_0}{\lambda}\right)\frac{\exp\left(\frac{\pi i}{\lambda z_0}\left\|(x,y)-(\xi,\eta)\right\|_2^2\right)}{4\pi z_0}.
\end{equation}
If we choose the meshsize $d_0$ such that the
crucial \emph{aperture condition}~\cite{fayastro09}
\begin{equation}\label{eqnRadar4}
\rho:=\frac{d_0 B}{\lambda z_0}\in\N
\end{equation}
is fulfilled, then the normalized system of functions 
\[
\widehat{G}(b,r_\ell):= 4\pi z_0\widetilde{G}\left(b,r_\ell\right), \quad b \in [0,B], \ell  \in [N],
\]
satisfies the convenient orthonormality relation
\begin{align}
&\frac{1}{B^2}\int_{[0,B]^2} \widehat{G}(b,r_m)\overline{\widehat{G}(b,r_\ell)}db=\frac{\exp\left(\frac{\pi i}{\lambda z_0}\left(\left\|(x_m,y_m)\right\|_2^2-\left\|(x_\ell,y_\ell)\right\|_2^2\right)\right)}{B^2}\nonumber\\\
&\times\int_{[0,B]}\int_{[0,B]}\exp\left(-\frac{2\pi i}{\lambda z_0}(x_m-x_\ell)\xi\right)\exp\left(-\frac{2\pi i}{\lambda z_0}(y_m-y_\ell)\eta\right)d\xi d\eta\nonumber\\
&=\delta_{\ell m}.\label{eqnRadar5}
\end{align}
It is for this relation to hold that we make the approximation (\ref{eqnRadar3}).

Let us now describe the scattering matrix. Assume we have a vector $(x_j)_{j\in[N]}\in\C^N$ of reflectivities on the resolution grid. 
We sample $n$ antenna positions $b_1,\hdots,b_n\in[0,B]^2$ independently at random according to the uniform distribution on $[0,B]^2$. 
If antenna element $b_j\in[0,B]^2$ transmits and $b_k\in[0,B]^2$ receives, then we model the echo $y_{jk}$ as
\begin{equation}\label{eqnRadar6}
y_{jk}=\sum_{\ell=1}^N\widehat{G}(b_j,r_\ell)\widehat{G}(r_\ell,b_k)x_\ell, \quad (j,k) \in [n]^2.
\end{equation}
This is called the \emph{Born approximation}~\cite{BW99a}. It amounts to discarding multipath scattering effects. 
So if the transmit-receive mode is that one antenna element transmits at a time and the whole aperture receives the echo, 
the appropriately scaled sensing matrix $A\in\C^{n^2\times N}$ is given entrywise by
\begin{equation}\label{eqnRadar7}
A_{(j,k),\ell} := 
\widehat{G}(b_j,r_\ell)\widehat{G}(r_\ell,b_k), \quad
(j,k) \in[n]^2, \ell \in [N].
\end{equation}
Then $y = A x$ by \eqref{eqnRadar6}.
Due to the randomness in the $b_k$, $k\in [n]$, the matrix $A$ is a (structured) 
random matrix with coupled rows and columns. 

In many scenarios the number of targets is small compared to the grid size. 
This naturally leads to sparsity in the vector $x \in \C^{N}$ of reflectivities, $\|x\|_0 := \#\{\ell: x_\ell \neq 0\} \leq s$, where
$s \ll N$. Compressive sensing suggests that in such a scenario, 
we can recover $x$ from undersampled measurements $y = A x \in \C^{n^2}$
when $n^2 \ll N$. We note that $A$ contains only $n(n+1)/2$ different rows
due to the symmetries in the sensing setup.
Our goal is determine a good bound on the required minimal number of antennas $n$ in order to 
ensure recovery of an $s$-sparse scene. A small number of antennas has clear advantages such as low costs of imaging hardware.

\end{subsection}

\begin{subsection}{Compressive sensing}\label{section_cs}

We briefly describe the basics of compressive sensing in order to place our results outlined below into context.
Given measurements $y = A x \in \C^m$ of a sparse vector $x \in \C^N$, where $A \in \C^{m \times N}$ is the so-called measurement matrix, 
we would like to reconstruct $x$ in the underdetermined case that $m \ll N$ by taking into consideration the sparsity.

The na{\"i}ve approach of $\ell_0$-minimization
\begin{equation}\label{eq:l0min}
\min_{z \in \C^{N}} \|z\|_0 \quad \mbox{ subject to } A z = y
\end{equation}
is NP-hard \cite{na95}. Hence several tractable alternatives were proposed including $\ell_1$-minimization, also called basis pursuit \cite{chdosa98,do06-2,carota06},
\begin{equation}\label{eq:l1min}
\min_{z \in \C^{N}} \|z\|_1 \quad \mbox{ subject to } Az = y.
\end{equation}
This can be seen as a convex relaxation of \eqref{eq:l0min} and can be solved via efficient convex optimization methods \cite{bova04,chpo11}.
It is by now well-understood that $\ell_1$-minimization can recover sparse vectors under appropriate conditions. Remarkably, random matrices
provide (near-)optimal measurement matrices in this context and good deterministic constructions are lacking to date, see \cite{ra09,fora11} for a
discussion. For instance, an $m \times N$ Gaussian random matrix $A$ ensures exact (and stable) recovery of all $s$-sparse
vectors $x$ from $y = Ax$ using $\ell_1$-minimization (and other types of algorithms) with high probability provided
\begin{equation}\label{rec:cond:Gaussian}
m \geq C s \log(N/s),
\end{equation}
where $C>0$ is a universal constants. This bound is optimal
\cite{do06-2,foparaul10}. It is crucial that $m$ is allowed to scale
linearly in $s$. 
The $\log$-factor cannot be removed. Recovery is stable under passing to
approximately sparse vectors and under adding noise to the 
measurements. In the latter case, one may rather work with the noise-constrained $\ell_1$-minimization problem
\begin{equation}\label{eqnN2}
\min_{z \in \C^{N}} \|z\|_1 \quad \mbox{ subject to } \|Az - y\|_2 \leq \eta.
\end{equation}
Random partial Fourier matrices \cite{carota06,cata06,ruve08,ra05-7,ra09} (that is, random row-submatrices of the discrete Fourier matrix) 
and other types of structured random matrices \cite{ra09,pfra10} also provide $s$-sparse recovery
under similar conditions as in \eqref{rec:cond:Gaussian} (with additional $\log$-factors). 

Some of the mentioned recovery results are derived using the restricted isometry property (RIP) \cite{cata06,carota06-1}.
This leads to uniform guarantees in the sense that once the matrix is
selected, then with high probability {\it every} $s$-sparse vector can be 
recovered from $y = Ax$. The RIP, however, is a rather strong condition which is sometimes hard to verify. In particular, it is open
to verify it for our random matrix in \eqref{eqnRadar7}. Instead, we may work with weaker conditions, 
which ensure nonuniform recovery in the sense that a fixed $s$-sparse vector is recovered with high probability using a random draw of the matrix.
Our result below for the structured random matrix in \eqref{eqnRadar7} is based on the extension of certain general recovery conditions for $\ell_1$-minimization \cite{fu04,tr05-1,capl11-1} to stable recovery
using a so-called dual certificate, see Section \ref{sec:general}.

\end{subsection}

\begin{subsection}{Main results}\label{section_submatrices}


We define the 
error of best $s$-term approximation in the $\ell_1$-norm by 
\begin{equation*}
\sigma_s(x)_1:=\inf_{\|z\|_0 \leq s} \|x-z\|_1. 
\end{equation*}
Furthermore, we will assume throughout that the \emph{aperture condition}
\begin{equation}\label{eqnApCond}
\rho:=\frac{d_0B}{\lambda z_0}\in\N
\end{equation}
holds, which can be accomplished by an appropriate choice of the meshsize
$d_0$. The further notation is the one used in Section \ref{radar_setup}. 
We will refer to the matrix $A \in \C^{n^2 \times N}$ 
in \eqref{eqnRadar7} with the antenna positions $b_1,\hdots,b_n$ selected independently and uniformly at random from $[0,B]^2$
as the {\it random scattering matrix}. Note that the aperture condition \eqref{eqnApCond} implies that $\E A^*A = n^2\Id$ by a similar
computation as in \eqref{eqnRadar5}, that is, in expectation
the matrix $A^*A$ behaves nicely, which will be crucial in the proof. Let us now state our nonuniform recovery result.

\begin{theorem}\label{Thm_det_patterns}
Let $x\in\C^N$ 
and $A\in\C^{n^2\times N}$ be a draw of the random scattering matrix.
Let $s\in \N$ be some sparsity level. 
Suppose we are given noisy measurements $y=Ax+e \in \C^{n^2}$ with $\left\|e\right\|_2\leq\eta n$. 
If, for $\varepsilon > 0$,
\begin{equation}\label{eqnN1}
n^2\geq Cs\log^2\left(\frac{cN}{\varepsilon}\right) 
\end{equation}
with universal constants $C,c>0$,
then with probability at least $1-\varepsilon$, the solution $\widehat{x}\in\C^N$ to the noise-constrained $\ell_1$-minimization problem 
\begin{equation}\label{eq:BPDN}
\min_{z \in \C^{N}} \|z\|_1 \quad \mbox{ subject to } \|Az - y\|_2 \leq \eta n.
\end{equation}
satisfies
\begin{equation}\label{eqnN3}
\left\|x-\widehat{x}\right\|_2\leq C_1 \sqrt{s} \eta+C_2 \sigma_s(x)_1.
\end{equation}
The constants satisfy $C\leq\left(800 e^{3/4}\right)^2 \approx 2.87\cdot 10^6$, $c\leq 6$, $C_1\leq4(1+\sqrt{2})+8\sqrt{3}\approx 23.513$, $C_2\leq4(1+\sqrt{6})\approx 13.798$.
\end{theorem}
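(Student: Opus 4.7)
The proof will apply the inexact-duality framework established in Section~\ref{sec:general}, which reduces stable $\ell_1$-recovery to two ingredients: (a) a condition-number bound for the restricted submatrix $A_S$, where $S\subset[N]$ is the support of the best $s$-term approximation of $x$, namely $\|A_S^*A_S/n^2 - \Id_S\|_{op}\leq 1/2$; and (b) an inexact dual certificate $u\in\C^{n^2}$ with $\|A_S^*u - \sgn(x_S)\|_2$ small and $\|A_{S^c}^*u\|_\infty\leq 1/2$, together with a controlled $\|u\|_2$, so that the stability estimate \eqref{eqnN3} follows mechanically from the general lemma.

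For (a) I would invoke the condition-number estimates of Section~\ref{sec:conditionnumbers}. Writing the restricted deviation $(A^*A/n^2 - \Id)_S$ as a sum $n^{-2}\sum_{j,k=1}^n Z_{jk}$ of bounded self-adjoint $s\times s$ summands, I apply a matrix Bernstein/Khintchine inequality. The subtlety is that the summands are not independent across pairs $(j,k)$ since row $(j,k)$ depends on the two antennas $b_j,b_k$; this coupling is addressed by a decoupling step or by exploiting the product-of-antennas structure of each row before concentration. Computing the variance proxy and the uniform bound on each $Z_{jk}$ (using $|\widehat{G}(b,r_\ell)|=1$) yields failure probability at most $\varepsilon/2$ once $n^2\gtrsim s\log(s/\varepsilon)$.

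For (b) the strategy is the golfing scheme of Gross. Because the fundamental source of independence is the $n$ antenna positions and not the $n^2$ rows, I would partition $[n]$ into $L\sim\log(N/\varepsilon)$ disjoint blocks $T_1,\dots,T_L$ and use the sub-operator of $A$ determined by pairs involving $T_\ell$ to construct the $\ell$-th increment $u^{(\ell)}$ of the certificate $u=\sum_{\ell=1}^L u^{(\ell)}$. At each step, conditioning of the block operator (a rescaled instance of (a) for a smaller antenna count) guarantees that the on-support residual $\|A_S^* u^{(1:\ell)}-\sgn(x_S)\|_2$ contracts geometrically, while a scalar Bernstein bound on the $N-s$ columns of $A_{S^c}$, followed by a union bound over $[N]\setminus S$, keeps $\|A_{S^c}^* u^{(\ell)}\|_\infty$ small. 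The union bound contributes one factor of $\log N$ and the number $L$ of batches contributes another, producing the $\log^2(cN/\varepsilon)$ factor in \eqref{eqnN1}.

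The hard part is precisely this coupling between rows of $A$ through shared antennas: batches must be defined so that the partial operators are genuinely independent, and every concentration argument must respect the product-of-antennas structure rather than the naive row-independence that one inherits from classical partial-Fourier or bounded orthonormal system results. The intermediate random-phase result of Section~\ref{sec:nonuniform}, which relies only on a Cand\`es--Plan-type one-shot inexact certificate and is simpler precisely because random signs obviate iteration, serves as a warm-up; Section~\ref{sec:deterministic} then removes the phase assumption via the full golfing construction described above. Once (a) and (b) hold with probability at least $1-\varepsilon$ each, Section~\ref{sec:general}'s recovery lemma yields \eqref{eqnN3} with constants that can be tracked through the golfing bookkeeping to match those stated in the theorem.
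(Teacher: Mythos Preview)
Your outline for part (a) matches the paper's Section~\ref{sec:conditionnumbers}: decoupling plus matrix Bernstein on the product-of-antennas structure is exactly what is done there.

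Part (b), however, diverges substantially from the paper and carries a real gap. First, the paper's Section~\ref{sec:general} framework is an \emph{exact} dual-certificate criterion, $u_T=\sgn(x)_T$, not an inexact one; the certificate is the explicit $v=\tilde A_T(\tilde A_T^*\tilde A_T)^{-1}\sgn(x)_T$, and the entire difficulty is to bound $\|u_{T^c}\|_\infty$. Second, the paper does \emph{not} run a golfing scheme. Instead it expands $(\tilde A_T^*\tilde A_T)^{-1}$ as a truncated Neumann series of order $m\sim\log(N/\varepsilon)$ and controls the moments
\[
\E\Bigl[\bigl|\tilde a_\ell^*\tilde A_T(\Id-\tilde A_T^*\tilde A_T)^k\sgn(x)_T\bigr|^{2M}\Bigr]
\]
by a combinatorial argument counting set partitions (associated Stirling numbers), exploiting the specific exponential form of the scattering kernel. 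This moment method, in the style of \cite{carota06,ra05-7}, is what produces the $\log^2$ factor.

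The gap in your golfing plan is the independence budget. To get genuinely independent batches you must partition the $n$ antennas into $L\sim\log(N/\varepsilon)$ blocks and use only pairs $(j,k)$ with both indices in the same block; otherwise batches share antennas and are dependent. But then each batch has only $(n/L)^2$ rows, and invoking the conditioning estimate of Section~\ref{sec:conditionnumbers} per batch forces $(n/L)^2\gtrsim s\log^2(\cdot)$, hence $n^2\gtrsim s\,L^2\log^2(\cdot)\sim s\log^4(N/\varepsilon)$, not the claimed $\log^2$. Using ``pairs involving $T_\ell$'' with the other index free destroys independence across batches. So as stated, your scheme would yield a weaker theorem; matching the paper's exponent requires either the paper's moment/partition approach or a non-obvious refinement of the batch construction that you have not supplied.
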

\begin{remark} 
\begin{itemize}
\item[(a)] The constants appearing in Theorem \ref{Thm_det_patterns} are quite large and reflect a worst case analysis. No attempt has been made to optimize the above bounds. In practice, much better bounds can be expected, see also the numerical results below.
\item[(b)] The scaling of the noise level, $\left\|e\right\|_2\leq\eta n$ is natural because $e \in \C^{n^2}$.
Indeed, if we have a componentwise bound $|e_j| = |(A x)_j - y_j| \leq \eta$ for all $j \in [n]^2$ then it is satisfied. 
\item[(c)] The error bound \eqref{eqnN3} is slightly worse than the one we would get under the RIP. In fact, if $A$ has
the RIP then the associated error bound improves the right hand side of \eqref{eqnN3} by a factor of $s^{-1/2}$
\cite{carota06-1}. Unfortunately, it is so far unknown whether the random scattering matrix $A$ obeys the RIP under 
a similar condition as \eqref{eqnN1}, so that the error bound \eqref{eqnN3} is the best one can presently achieve.
\item[(d)] If $x$ is $s$-sparse, $\sigma_s(x)_1=0$, and if there is no noise, $\eta = 0$, then \eqref{eqnN3} implies 
exact reconstruction, $\widehat{x} = x$, by equality-constrained $\ell_1$-minimization \eqref{eq:l1min}.
\item[(e)] We can specialize the error bound in the previous theorem for the case of 
Gaussian noise. To this end, assume that
the components of $e \in \C^{n^2}$ are i.i.d.\ complex Gaussians with variance
$\eta^2$, where the real and imaginary part of a complex Gaussian are independent 
real Gaussians with variance $\eta^2/2$. A standard calculation shows that the noise satisfies $\left\|e\right\|_2\leq\eta n\log(1/\varepsilon)$ with probability at least $1-\varepsilon$. Assuming that $e$ is independent of the matrix $A$, it follows that the solution $\hat{x}$ of noise-constrained $\ell_1$-minimization with bound $\left\|Az-y\right\|_2\leq\eta n\log(1/\varepsilon)$ satisfies
\begin{equation}
\label{eqnN3a}
\left\|\hat{x}-x\right\|_2\leq C_1\eta\sqrt{s}\log(1/\varepsilon)+C_2\sigma_s(x)_1
\end{equation} 
with probability at least $(1-\varepsilon)^2$. The constants $C_1$, $C_2$ satisfy the bounds of Theorem \ref{Thm_det_patterns}.
\end{itemize}
\end{remark}


Theorem \ref{Thm_det_patterns} holds for a fixed, deterministic
$x\in\C^N$. We define the \emph{sign} of a number $a\in\C$ as
\begin{equation*}
\sign(a) =
\begin{cases}
\frac{a}{\left|a\right|} & \text{if } a \not= 0,\\
0 & \text{if } a = 0.
\end{cases}
\end{equation*} 
For a vector $x\in\C^N$ we denote by $\sign(x):=\left(\sign(x_j)\right)_{j\in[N]}$ the sign pattern of $x$. On the way to the proof of Theorem \ref{Thm_det_patterns}, we will provide the easier result stated next 
for the case when the sign pattern of $x$ restricted to its support set
$T\subset[N]$, $\sign(x)_T=\left(\sign(x_j)\right)_{j\in T}$, forms a Rademacher or a Steinhaus
sequence. 
The latter amounts to assuming that the phases of the reflectivities
are iid uniformly distributed on $[0,2\pi]$, which is a common assumption in
array imaging and radar signal processing.
Theorem  \ref{Thm_rand_patterns} below actually establishes sparse recovery in a more general
setting than the inverse scattering problem. It is not only applicable to
the radar-type sensing matrices analyzed above, but to more general sensing
matrices whose rows and columns are not independent, and whose entries
are associated with a certain class of orthonormal systems. 
Its statement requires the notion of bounded orthonormal systems \cite{ra09}.

\begin{definition}\label{defi_BOS}
Let $D\subset\R^d$ be a measurable set and $\nu$ a probability measure on $D$. 
A system of functions $\left\{\Phi_k:D\rightarrow \C\right\}_{k\in[N]}$ is called a \emph{bounded orthonormal system} (BOS)
with respect to $(D,\nu)$ if
\begin{equation*}
\int_D\Phi_k(t)\overline{\Phi_\ell(t)}\nu(dt)=\delta_{k\ell}
\end{equation*}
and if the functions are uniformly bounded by a constant $K\geq 1$,
\begin{equation*}
\max_{k\in[N]}\left\|\Phi_k\right\|_\infty\leq K.
\end{equation*}
\end{definition} 

Let now $\left\{\Phi_\ell \right\}_{\ell \in[N]}$ be a BOS on $(D,\nu)$ with bounding constant $K=1$ and with the 
property that $\left\{\Phi_\ell^2\right\}_{\ell \in[N]}$ is also a BOS on $(D,\nu)$. Note that due to the orthogonality relation, we
then necessarily have $\left|\Phi_\ell(t)\right|=1$ for all $t\in D$. The functions $\Phi_\ell(t) = \widehat{G}(r_\ell,t)$, $t \in [0,B]^2$ fall
into this setup when the aperture condition \eqref{eqnApCond} is satisfied, see also \eqref{eqnRadar5}. Another example is provided 
by the Fourier system $\{\Phi_\ell\}_{\ell \in \Z}$, where
$\Phi_\ell(t) = e^{2\pi i \ell t}$, $\ell \in \Z$, $t \in [0,1]$.
For $b_1,b_2\in D$, set
\begin{equation*}
v(b_1,b_2):=\left(\overline{\Phi_\ell(b_1)}\,\overline{\Phi_\ell(b_2)}\right)_{\ell\in[N]}\in\C^N.
\end{equation*}
Sample now $n$ elements $b_1,\hdots,b_n$ independently at random according to $\nu$ from $D$. Define the sampling matrix $A$ via
\begin{equation}\label{eqn_BOS_sample}
A:=\left(v(b_j,b_k)^*\right)_{j,k\in[n]}\in\C^{n^2\times N},
\end{equation}
so that $A$ is the matrix with rows $v(b_j,b_k)^*$, $(j,k) \in [n]^2$. Note that 
with the system $\Phi_\ell(b) = \widehat{G}(r_\ell,b)$ we recover the random 
scattering matrix \eqref{eqnRadar7} in this way.

Now we can state our main result for random sign patterns. We recall that the entries of a (random) Rademacher vector $\mathbf{\epsilon}$
are independent random variables that take the values $\pm 1$ with equal probability. Similarly, a Steinhaus vector is a random vector
where all entries are independent and uniformly distributed on the complex torus $\{z \in \C: |z| = 1 \}$.
\begin{theorem}\label{Thm_rand_patterns}
Let $A\in\C^{n^2\times N}$ be a draw of the random sampling matrix from (\ref{eqn_BOS_sample}). Let $x\in\C^N$ and $T\subset[N]$ be the index
set corresponding to its $s$ largest absolute entries.
Assume that the sign vector $\sign(x)_T$ of $x$ restricted to $T$ 
forms a Rademacher or a Steinhaus sequence. Suppose we take noisy measurements 
$y=Ax+e\in\C^{n^2}$ with $\left\|e\right\|_2\leq\eta n$. If 
\begin{equation}\label{eqn_rand_cond}
n^2\geq C s\log\left(\frac{c_1(N-s)}{\varepsilon}\right)\log^2\left(c_2(N-s)^2s/\varepsilon\right), 
\end{equation} 
then with probability at least $1-\varepsilon$, the solution $\widehat{x}\in\C^N$ to
noise-constrained $\ell_1$-minimization \eqref{eq:BPDN} satisfies
\begin{equation}\label{eqn_rand_rec_err}
\left\|\widehat{x}-x\right\|_2\leq C_1\sqrt{s}\eta+C_2\sigma_s(x)_1.
\end{equation}
The constants satisfy $C\leq 1024$, $c_1\leq 8$, $c_2\leq 576$, $C_1\leq4(1+\sqrt{2})+8\sqrt{3}\approx 23.513$, $C_2\leq4(1+\sqrt{6})\approx 13.798$.
\end{theorem}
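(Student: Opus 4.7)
The plan is to combine the general $\ell_1$-recovery criterion from Section \ref{sec:general} with the condition number estimates from Section \ref{sec:conditionnumbers} and a classical Hoeffding/Khintchine argument that exploits the randomness of $\sign(x)_T$. Throughout, let $T\subset[N]$ denote the index set of the $s$ largest entries of $x$ and write $\epsilon:=\sign(x)_T$, which by hypothesis is Rademacher or Steinhaus. The overall sufficient conditions we aim to establish (cf.\ the recovery criterion via an inexact dual certificate) are roughly: a conditioning bound of the form $\|n^{-2}A_T^*A_T-\Id_T\|_{2\to 2}\le 1/2$ (and thus $\|A_T\|_{2\to 2}\le\sqrt{3n^2/2}$), together with a vector $u=A^*h$ such that $u_T=\epsilon$, $\|u_{T^c}\|_\infty\le 1/2$, and $\|h\|_2\lesssim\sqrt{s}/n$. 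Given these bounds, the stability inequality \eqref{eqn_rand_rec_err} follows from the general result with the stated constants.

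First, I would establish the conditioning bound. Since the entries of $A$ come from a BOS whose pointwise squares are also a BOS (so $|\Phi_\ell|\equiv 1$), $\E[n^{-2}A^*A]=\Id$. The condition number estimates of Section \ref{sec:conditionnumbers} are tailored exactly to the matrix $A$ with coupled rows and columns and yield, for any fixed $T$ with $|T|=s$,
\begin{equation*}
\P\Bigl(\bigl\|n^{-2}A_T^*A_T-\Id_T\bigr\|_{2\to 2}>1/2\Bigr)\le \varepsilon/3
\end{equation*}
as soon as $n^{2}\gtrsim s\,\log^{2}(s/\varepsilon)$, which is absorbed by \eqref{eqn_rand_cond}. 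On this event $A_T^*A_T$ is invertible, $\|(A_T^*A_T)^{-1}\|_{2\to 2}\le 2/n^{2}$, and $\|A_T\|_{2\to 2}^{2}\le\tfrac{3}{2}n^{2}$.

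Next I would construct the dual certificate directly (no golfing is needed thanks to the random sign assumption). Define
\begin{equation*}
h:=A_T(A_T^*A_T)^{-1}\epsilon,\qquad u:=A^*h.
\end{equation*}
Then $u_T=\epsilon$ exactly, and $\|h\|_2^{2}=\epsilon^{*}(A_T^*A_T)^{-1}\epsilon\le 2s/n^{2}$, giving $\|h\|_2\le\sqrt{2s}/n$. For any $\ell\in T^{c}$ we can write
\begin{equation*}
u_\ell=\langle w_\ell,\epsilon\rangle,\qquad w_\ell:=(A_T^*A_T)^{-1}A_T^*A_{\cdot,\ell}.
\end{equation*}
Conditioning on the antenna positions $b_1,\ldots,b_n$ (so that $w_\ell$ is deterministic) and using the Hoeffding/Khintchine inequality for Rademacher or Steinhaus sums, I obtain a conditional subgaussian tail
\begin{equation*}
\P_\epsilon(|u_\ell|>1/2\mid A)\le 2\exp\!\bigl(-c/\|w_\ell\|_{2}^{2}\bigr).
\end{equation*}
A union bound over $\ell\in T^{c}$ then gives $\|u_{T^{c}}\|_\infty\le 1/2$ with probability at least $1-\varepsilon/3$ provided
\begin{equation*}
\max_{\ell\in T^{c}}\|w_\ell\|_2^{2}\le \frac{c'}{\log\!\bigl(c_1(N-s)/\varepsilon\bigr)}.
\end{equation*}

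The main obstacle is therefore controlling $\max_{\ell\in T^{c}}\|w_\ell\|_2$ uniformly in $\ell$. On the conditioning event, $\|w_\ell\|_2\le (2/n^{2})\,\|A_T^{*}A_{\cdot,\ell}\|_2$, so the task reduces to proving
\begin{equation*}
\max_{\ell\in T^{c}}\bigl\|A_T^{*}A_{\cdot,\ell}\bigr\|_2\lesssim n\sqrt{s}\,\log\!\bigl(c_2(N-s)^{2}s/\varepsilon\bigr)
\end{equation*}
with the probability budget $\varepsilon/3$. Each coordinate $(A_T^{*}A_{\cdot,\ell})_j$ is a double sum over the antenna pairs of entries with coupled rows and columns and expectation zero (by orthogonality and $\ell\notin T$), and fixing $\ell$ the vector is $\sum_{j,k}\Phi_\ell(b_j)\Phi_\ell(b_k)\overline{v(b_j,b_k)}_{T}$. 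I would handle this with the same decoupling/symmetrization machinery that underlies the condition number estimates of Section \ref{sec:conditionnumbers}: conditioning on $b_1,\ldots,b_n$ makes pairs of the form $(j,k)$ and $(k,j)$ manageable, while a matrix Bernstein or moment bound on the conditional sums, combined with a union bound over $\ell\in T^{c}$ (responsible for the extra logarithmic factor in \eqref{eqn_rand_cond}), delivers the required estimate. This is the technically heaviest step, because independence across $(j,k)$ fails due to the shared $b_j$'s, and it is exactly the place where standard BOS arguments for partial Fourier matrices have to be upgraded. Once the three events above intersect with total probability at least $1-\varepsilon$, the general recovery criterion from Section \ref{sec:general} yields \eqref{eqn_rand_rec_err} with the stated constants, by tracking $\|h\|_2\le\sqrt{2s}/n$, $\|A_T\|_{2\to 2}\le\sqrt{3n^{2}/2}$, and the bounds on $u$.
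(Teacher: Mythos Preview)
Your outline matches the paper's proof almost step for step: reduce to the deterministic recovery criterion of Theorem~\ref{Thm_noisy_cond}, invoke Theorem~\ref{central:thm} for the conditioning event, take the exact dual certificate $h=A_T(A_T^*A_T)^{-1}\epsilon$, and use Hoeffding on the random signs $\epsilon$ to control $\|u_{T^c}\|_\infty$. The only substantive divergence is in how you propose to bound $\max_{\ell\in T^c}\|w_\ell\|_2$, and there you are making life much harder than necessary.

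You describe the control of $\|A_T^*a_\ell\|_2$ as ``the technically heaviest step,'' requiring the decoupling/symmetrization machinery of Section~\ref{sec:conditionnumbers} because ``independence across $(j,k)$ fails.'' The paper avoids this entirely by exploiting the product structure of the columns: since $A_{(j,k),\ell}=\Phi_\ell(b_j)\Phi_\ell(b_k)$, the inner product factorizes as
\[
\langle a_\ell,a_{\tilde\ell}\rangle=\sum_{j,k=1}^n \Phi_\ell(b_j)\Phi_\ell(b_k)\overline{\Phi_{\tilde\ell}(b_j)\Phi_{\tilde\ell}(b_k)}=\Bigl(\sum_{k=1}^n \Phi_\ell(b_k)\overline{\Phi_{\tilde\ell}(b_k)}\Bigr)^{2}.
\]
Thus $|\langle \tilde a_\ell,\tilde a_{\tilde\ell}\rangle|=n^{-2}\bigl|\sum_{k}\Phi_\ell(b_k)\overline{\Phi_{\tilde\ell}(b_k)}\bigr|^2$, and after the crude bound $\|\tilde A_T^\dagger\tilde a_\ell\|_2^2\le 4s\max_{\tilde\ell\in T}|\langle\tilde a_\ell,\tilde a_{\tilde\ell}\rangle|^2$ one only needs a tail bound on a \emph{single} sum of $n$ independent, bounded, mean-zero terms. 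A scalar Hoeffding inequality plus a union bound over $(\ell,\tilde\ell)\in T^c\times T$ finishes the argument and produces exactly the condition~\eqref{eqn_rand_cond}. No decoupling, no matrix Bernstein, no conditional moment bounds are needed at this stage. Your heavier route would presumably also succeed, but it obscures the reason the random-sign case is so much cheaper than the deterministic-sign Theorem~\ref{Thm_det_patterns}, and it risks picking up extra logarithmic factors that the factorization sidesteps.
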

\begin{remark}
Whereas the bounds on the constants in Theorem \ref{Thm_det_patterns} are quite large, and certainly improvable, in the case of random sign patterns, the number of antennas required must satisfy
\begin{equation*}
n\geq 32\sqrt{s}\log^{3/2}\left(cN/\varepsilon\right),
\end{equation*}
which is a reasonable bound, see also the improvement in Remark \ref{remark_ic} (b).
\end{remark}
\end{subsection}

\section{Stable sparse recovery via $\ell_1$-minimization}
\label{sec:general}

In this section we establish a general result for the recovery of an individual vector $x\in\C^N$ from noisy measurements 
$y=Ax+e \in\C^m$ with $A\in\C^{m\times N}$. It uses a dual vector in the spirit of \cite{fu04,tr05-1} 
and extends these results to the noisy and non-sparse case.
The proof is inspired by \cite{capl11-1} for recovery based on the weak RIP. However, since we actually 
do not assume the weak RIP, the error bound in \eqref{eqnN9} below is slightly worse by a factor of $\sqrt{s}$ 
than the one in \cite[Section 4]{capl11-1}. In the noiseless and exact sparse case the theorem below implies exact recovery similar to \cite{fu04,tr05-1}. 

For a set $T \subset [N]$ and a matrix $A \in \C^{m \times N}$
with columns $a_j \in \C^m$, $j \in [N]$, we denote by $A_T = (a_j)_{j \in T}\in\C^{m\times\left|T\right|}$ the column-submatrix of $A$ with columns indexed by $T$ and by $T^c:=[N]\setminus T$ the complement of $T$ in $[N]$.
Similarly, we denote by $x_T\in\C^{|T|}$ the vector $x \in \C^N$ restricted to its entries in $T$. The operator norm of a matrix $B$ on $\ell_2$ 
is denoted by $\|B\|_{2 \to 2}$. 

\begin{theorem}\label{Thm_noisy_cond} Let $x\in\C^N$ and $A\in\C^{m\times N}$ with $\ell_2$-normalized columns, 
$\left\|a_j\right\|_2=1$, $j\in [N]$.
For $s\geq 1$, let $T\subset[N]$ be the set of indices of the $s$ largest absolute entries of $x$. 
Assume that $A_T$ is well-conditioned in the sense that
\begin{equation}\label{eqnN7}
\left\|A_{T}^*A_{T}-\Id\right\|_{2\to 2}\leq\frac{1}{2}
\end{equation} 
and that there exists a \emph{dual certificate} $u=A^*v\in\C^N$ with $v\in\C^m$ such that 
\begin{align}
u_T&=\sign(x)_T,\label{eqnN4}\\
\left\|u_{T^c}\right\|_\infty&\leq\frac{1}{2},\label{eqnN5}\\
\left\|v\right\|_2&\leq \sqrt{2s}.\label{eqnN6}
\end{align}
Suppose we are given noisy measurements $y=Ax+e\in\C^m$ 
with $\left\|e\right\|_2\leq\eta$. Then the solution $\widehat{x}\in\C^N$ to 
noise-constrained $\ell_1$-minimization \eqref{eqnN2} satisfies
\begin{equation}\label{eqnN9}
\left\|x-\widehat{x}\right\|_2\leq C_1\sqrt{s}\eta+C_2\sigma_s(x)_1.
\end{equation}
The constants satisfy $C_1\leq4(1+\sqrt{2})+8\sqrt{3}\approx 23.513$, $C_2\leq4(1+\sqrt{6})\approx 13.798$.
\end{theorem}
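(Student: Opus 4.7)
\medskip

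\noindent\textbf{Proof plan.} The plan is to follow a standard dual certificate argument in the spirit of Fuchs, Tropp, and Cand\`es--Plan, but extended to the noisy and compressible regime. Let $h := \widehat x - x \in \C^N$ denote the reconstruction error, split as $h = h_T + h_{T^c}$. I will bound $\|h_{T^c}\|_1$ via the dual certificate and $\ell_1$-optimality, then $\|h_T\|_2$ via the well-conditioning of $A_T$, and finally combine through $\|h\|_2 \leq \|h_T\|_2 + \|h_{T^c}\|_2 \leq \|h_T\|_2 + \|h_{T^c}\|_1$.

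\medskip

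\noindent\emph{Step 1 (data consistency).} Since both $x$ and $\widehat x$ are feasible for \eqref{eqnN2},
$$\|Ah\|_2 \leq \|A\widehat x - y\|_2 + \|y - Ax\|_2 \leq 2\eta.$$

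\noindent\emph{Step 2 (bound on $\|h_{T^c}\|_1$).} From $\ell_1$-optimality, $\|x+h\|_1 \leq \|x\|_1$. On $T$, the subdifferential inequality together with $u_T = \sign(x)_T$ (noting $|u_j|\leq 1$ so $u_T$ is a valid subgradient of $\|\cdot\|_1$ at $x_T$, even where $x_j = 0$) yields
$\|x_T + h_T\|_1 \geq \|x_T\|_1 + \Re\langle u_T, h_T\rangle,$
and on $T^c$ the reverse triangle inequality gives $\|x_{T^c}+h_{T^c}\|_1 \geq \|h_{T^c}\|_1 - \sigma_s(x)_1$. Combining these and rearranging,
$$\|h_{T^c}\|_1 \leq -\Re\langle u_T, h_T\rangle + 2\sigma_s(x)_1.$$
Now split $\Re\langle u_T, h_T\rangle = \Re\langle u, h\rangle - \Re\langle u_{T^c}, h_{T^c}\rangle$. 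The first term equals $\Re\langle A^*v, h\rangle = \Re\langle v, Ah\rangle$ and is bounded by $\|v\|_2\|Ah\|_2 \leq 2\sqrt{2s}\,\eta$ by \eqref{eqnN6} and Step 1. The second term is controlled via H\"older: $|\Re\langle u_{T^c}, h_{T^c}\rangle| \leq \|u_{T^c}\|_\infty\|h_{T^c}\|_1 \leq \tfrac12\|h_{T^c}\|_1$ by \eqref{eqnN5}. Substituting and absorbing the $\tfrac12\|h_{T^c}\|_1$ term,
$$\|h_{T^c}\|_1 \leq 4\sqrt{2s}\,\eta + 4\sigma_s(x)_1.$$

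\noindent\emph{Step 3 (bound on $\|h_T\|_2$).} The hypothesis \eqref{eqnN7} implies that $A_T^*A_T$ is invertible with $\|(A_T^*A_T)^{-1}\|_{2\to 2}\leq 2$ and $\|A_T\|_{2\to 2}\leq \sqrt{3/2}$. From $Ah = A_T h_T + A_{T^c}h_{T^c}$ one has
$$h_T = (A_T^*A_T)^{-1}\bigl(A_T^*Ah - A_T^*A_{T^c}h_{T^c}\bigr).$$
Estimate each piece: $\|A_T^*Ah\|_2 \leq \sqrt{3/2}\cdot 2\eta$, and since the columns $a_j$ of $A$ are unit-normalized, $\|A_T^*a_j\|_2 \leq \|A_T\|_{2\to 2}\leq \sqrt{3/2}$ for every $j\in T^c$, so by the triangle inequality $\|A_T^*A_{T^c}h_{T^c}\|_2 \leq \sqrt{3/2}\,\|h_{T^c}\|_1$. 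Combining with Step 2,
$$\|h_T\|_2 \leq 2\sqrt{6}\,\eta + \sqrt{6}\,\|h_{T^c}\|_1 \leq 2\sqrt{6}\,\eta + 8\sqrt{3s}\,\eta + 4\sqrt{6}\,\sigma_s(x)_1.$$

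\noindent\emph{Step 4 (assembly).} Using $\|h_{T^c}\|_2\leq\|h_{T^c}\|_1$ and combining Steps 2 and 3,
$$\|h\|_2 \leq \|h_T\|_2 + \|h_{T^c}\|_1 \leq \bigl(2\sqrt{6} + (8\sqrt{3}+4\sqrt{2})\sqrt{s}\bigr)\eta + (4\sqrt{6}+4)\,\sigma_s(x)_1,$$
which after bounding $2\sqrt{6}\leq (4+4\sqrt{2})\sqrt{s}$ (valid for $s\geq 1$) yields \eqref{eqnN9} with the constants as stated.

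\medskip

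\noindent\textbf{Main obstacle.} The delicate point is Step 2: the off-support bound $\|u_{T^c}\|_\infty \leq \tfrac12$ in \eqref{eqnN5} is what allows the $\tfrac12\|h_{T^c}\|_1$ contribution to be absorbed on the left-hand side — a weaker bound $\leq 1$ would not close the argument. In the complex setting one must also carefully track real parts in the subdifferential inequality. Steps 3 and 4 are then mainly bookkeeping of constants; the loss of the factor $\sqrt{s}$ relative to an RIP-based proof is an unavoidable price for not having a restricted isometry condition on columns outside $T$.
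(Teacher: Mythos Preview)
Your proof is correct and follows essentially the same dual-certificate route as the paper: bound $\|Ah\|_2$ by feasibility, control $\|h_{T^c}\|_1$ via the certificate and $\ell_1$-optimality, control $\|h_T\|_2$ via the conditioning hypothesis, and assemble. Your Steps~1, 2, and 4 match the paper's argument almost line by line.

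The one place you diverge is Step~3. The paper starts from $\tfrac12\|h_T\|_2^2\le\|A_Th_T\|_2^2=\langle A_Th_T,Ah\rangle-\langle A_Th_T,A_{T^c}h_{T^c}\rangle$, bounds the first term by $\|h_T\|_1\|A_T^*Ah\|_\infty\le 2\sqrt{s}\,\eta\,\|h_T\|_2$ (using column normalization), divides through by $\|h_T\|_2$, and obtains $\|h_T\|_2\le 4\sqrt{s}\,\eta+\sqrt{6}\,\|h_{T^c}\|_1$. You instead invert $A_T^*A_T$ directly and bound $\|A_T^*Ah\|_2\le\sqrt{3/2}\cdot 2\eta$ via the operator norm, arriving at $\|h_T\|_2\le 2\sqrt{6}\,\eta+\sqrt{6}\,\|h_{T^c}\|_1$. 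Your bound is actually sharper in the noise term (no $\sqrt{s}$ factor there), which is why you need the final absorption $2\sqrt{6}\le(4+4\sqrt{2})\sqrt{s}$ to land on the paper's stated constant $C_1$; the paper picks up that $\sqrt{s}$ directly from the H\"older step. Both routes are short and the difference is cosmetic.
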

\begin{remark} The constants appearing in the conditions above are rather arbitrary and chosen for convenience.
\end{remark}
\begin{proof} 
Write $\widehat{x}=x+h$. Due to (\ref{eqnN2}) and the assumption on the noise level, $\|e\|_2 \leq \eta$, we have
\begin{equation}\label{eqnN10}
\left\|Ah\right\|_2 = \|A x - y - (A\widehat{x} - y)\| \leq \|Ax-y\|_2 + \|A \widehat{x} - y\|_2 \leq 2\eta.
\end{equation}
Since $x$ is feasible for the optimization program \eqref{eqnN2} we obtain
\begin{align*}
\left\|x\right\|_1\geq\left\|\widehat{x}\right\|_1&=\left\|(x+h)_T\right\|_1+\left\|(x+h)_{T^c}\right\|_1\\
&\geq\Re\left(\langle(x+h)_T,\sgn(x)_T\rangle\right)+\left\|h_{T^c}\right\|_1-\left\|x_{T^c}\right\|_1\\
&=\left\|x\right\|_1+\Re\left(\langle h_T,\sgn(x)_T\rangle\right)+\left\|h_{T^c}\right\|_1-2\left\|x_{T^c}\right\|_1,
\end{align*}
where we applied H\"older's and the triangle inequality in the second line.
Rearranging the above yields
\begin{equation}\label{eqnN11}
\left\|h_{T^c}\right\|_1\leq\left|\Re\left(\langle h_T,\sgn(x)_T\rangle\right)\right|+2\left\|x_{T^c}\right\|_1.
\end{equation}
Let $u=A^*v$ be the dual certificate. Then, using the Cauchy-Schwarz and H\"older's inequality
\begin{align*}
\left|\Re\left(\langle h_T,\sgn(x)_T\rangle\right)\right|&=\left|\Re\left(\langle h_T,\left(A^*v\right)_T\rangle\right)\right|
\leq\left|\langle h,A^*v\rangle\right|+\left|\langle h_{T^c},u_{T^c}\rangle\right|\\
&\leq\left\|Ah\right\|_2\left\|v\right\|_2+\left\|h_{T^c}\right\|_1\left\|u_{T^c}\right\|_\infty
\leq 2\sqrt{2s}\eta+\frac{1}{2}\left\|h_{T^c}\right\|_1,
\end{align*}
where we used (\ref{eqnN5}) and (\ref{eqnN6}) in the last line. Plugging into (\ref{eqnN11}) yields
\begin{equation}\label{eqnN12}
\left\|h_{T^c}\right\|_1\leq 4\sqrt{2s}\,\eta+4\left\|x_{T^c}\right\|_1.
\end{equation}
Due to (\ref{eqnN7}), we have 
\begin{align}
\frac{1}{2}\left\|h_T\right\|_2^2&\leq\left\|A_Th_T\right\|_2^2 =\langle A_Th_T,Ah\rangle-\langle A_Th_T,A_{T^c}h_{T^c}\rangle.\label{eqnN13}
\end{align}
Using H\"older's inequality, the normalization of the columns of $A$ and (\ref{eqnN10}), we obtain
\begin{align*}
\left|\langle A_Th_T,Ah\rangle\right|&\leq\left\|h_T\right\|_1\left\|A_T^*Ah\right\|_\infty \leq2\sqrt{s}\eta\left\|h_T\right\|_2.
\end{align*}
The triangle inequality and the Cauchy Schwarz inequality give, by noting that \eqref{eqnN7} implies $\left\|A_T\right\|_{2\to 2}\leq\sqrt{\frac{3}{2}}$,
\begin{align*}
\left|\langle A_Th_T,A_{T^c}h_{T^c} \rangle\right|&\leq \sum_{j \in T^c} |h_j| |\langle A_T h_T, a_j\rangle|
\leq \sum_{j \in T^c} |h_j| \|A_T h_T\|_2 \| a_j\|_2 \leq \sqrt{\frac{3}{2}} \|h_T\|_2 \|h_{T^c}\|_1. 
\end{align*}
Inserting into (\ref{eqnN13}) we obtain
\begin{equation}\label{eqnN14}
\left\|h_T\right\|_2\leq4\sqrt{s}\eta+\sqrt{6}\left\|h_{T^c}\right\|_1.
\end{equation}
Combining (\ref{eqnN12}) and (\ref{eqnN14}) we arrive at
\begin{align*}
\left\|h\right\|_2&\leq\left\|h_T\right\|_2+\left\|h_{T^c}\right\|_1\\
&\leq (4(1+\sqrt{2})+8\sqrt{3})\sqrt{s}\eta+4(1+\sqrt{6})\left\|x_{T^c}\right\|_1.
\end{align*}
Due to the choice of $T$ we have $\left\|x_{T^c}\right\|_1=\sigma_s(x)_1$. This completes the proof.
\end{proof}

\begin{section}{Conditioning of submatrices}
\label{sec:conditionnumbers}

Theorem \ref{Thm_noisy_cond} requires to find a dual certificate $u=A^*v$ with $u_T = \sign(x)_T$, where $A$ is the random
scattering matrix introduced in Section \ref{radar_setup} and $T\subset[N]$ is some support set.
Condition (\ref{eqnN7}) in Theorem \ref{Thm_noisy_cond} suggests to investigate the conditioning of $A_T$. 
Recall that 
\begin{equation*}
v(b_j,b_k)=\left(\overline{\Phi_\ell(b_j)}\, \overline{\Phi_\ell(b_k)}\right)_{\ell \in [N]}\in\C^N,
\end{equation*} 
where $\{\Phi_\ell\}$ is a bounded orthonormal system with constant $K=1$ such that $\{\Phi_\ell^2\}$ is also a bounded orthonormal
system. The rows of the random scattering matrix $A \in \C^{n^2 \times N}$ are the vectors $v(b_j,b_k)^* \in \C^{1\times N}$, $(j,k) \in [n]^2$, 
where the $b_1,\hdots, b_n$ are selected independently at random according to the orthonormalization measure $\nu$, see \eqref{eqn_BOS_sample} and 
Definition \ref{defi_BOS}. The scattering matrix $A$ in \eqref{eqnRadar7} is a special case of this setup.

We aim at a probabilistic estimate of the largest and smallest singular value of 
$\frac{1}{n} A_T \in \C^{n^2 \times s}$, i.e., the operator norm
\begin{equation}\label{eqn1}
\left\|\frac{1}{n^2} A_T^* A_T - \Id \right\|_{2 \to 2} = \left\|\frac{1}{n^2} \sum_{j,k=1}^n v(b_j,b_k)_T v(b_j,b_k)_T^* - \Id \right\|_{2 \to 2}.
\end{equation}
The central result of this section stated next provides an estimate of the tail of this quantity.


\begin{theorem}\label{central:thm}
Let $A\in\C^{n^2\times N}$ be the random matrix described above and let $T\subset[N]$ be a (fixed) subset of cardinality $\left|T\right|=s$. 
If, for $\delta,\varepsilon>0$, 
\begin{equation}\label{eqnA17}
n^2\geq 1024\delta^{-2}s\log^2\left(\frac{576 s^3}{\varepsilon}\right)
\end{equation}
then 
\begin{equation}\label{eqnA18}
\P\left(\left\|\frac{1}{n^2}A_{T}^*A_{T}-\Id\right\|_{2\to 2}\geq \delta\right)\leq\varepsilon.
\end{equation}
\end{theorem}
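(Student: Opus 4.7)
\emph{Proof plan.} Splitting $A_T^*A_T = \sum_{j,k=1}^n v(b_j,b_k)_T v(b_j,b_k)_T^*$ according to whether $j=k$, the BOS assumption on $\{\Phi_\ell^2\}$ gives $\E v(b,b)_T v(b,b)_T^* = \Id_T$ so that both pieces can be centered against $\Id_T$, yielding
\[
\frac{1}{n^2}A_T^*A_T - \Id_T \;=\; E_{\mathrm{diag}} \;+\; E_{\mathrm{off}},
\]
with $E_{\mathrm{diag}}:=\tfrac{1}{n^2}\sum_{j=1}^n(v(b_j,b_j)_T v(b_j,b_j)_T^*-\Id_T)$ and $E_{\mathrm{off}}:=\tfrac{1}{n^2}\sum_{j\neq k}(v(b_j,b_k)_T v(b_j,b_k)_T^*-\Id_T)$. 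The diagonal part is a sum of $n$ iid centered rank-$1$ Hermitian matrices, each of operator norm at most $(s+1)/n^2$ with matrix variance at most $s/n^3$, and is easily bounded by $\delta/2$ via a standard matrix Bernstein inequality under a hypothesis significantly weaker than \eqref{eqnA17}.

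The real work is the off-diagonal $U$-statistic $E_{\mathrm{off}}$. The kernel $g(b,b'):=v(b,b')_T v(b,b')_T^*-\Id_T$ is symmetric in $(b,b')$ and centered in each argument separately (since the BOS property of $\{\Phi_\ell\}$ gives $\E_b v(b,b')_T v(b,b')_T^*=\Id_T$ for every fixed $b'$), so the De la Pe\~{n}a--Gin\'e decoupling inequality bounds the tail of $\|E_{\mathrm{off}}\|_{2\to 2}$, up to a universal constant $C_{\mathrm{dec}}$, by the tail of the decoupled analogue $E_{\mathrm{dec}} := \tfrac{1}{n^2}\sum_{j,k=1}^n g(b_j,b_k')$, where $(b_k')$ is an independent copy of $(b_k)$.

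For $E_{\mathrm{dec}}$ I plan to condition on $(b_k')_k$ and apply matrix Bernstein to the conditionally iid $n$-fold sum $\sum_j Z_j$ with $Z_j:=\sum_k g(b_j,b_k')$. A direct computation yields the algebraic identity
\[
Z_j \;=\; n\,D_j (G'-\Id_T) D_j^*, \qquad D_j := \diag\bigl(\overline{\Phi_\ell(b_j)}\bigr)_{\ell\in T},
\]
where $G':=\tfrac{1}{n}\sum_k \tilde{u}_k' (\tilde{u}_k')^{*}$ with $\tilde{u}_k' := (\overline{\Phi_\ell(b_k')})_{\ell\in T}$ is the sub-Gram matrix of a standard $n \times s$ BOS submatrix associated with the $b_k'$. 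Since $|\Phi_\ell|\equiv 1$, each $D_j$ is unitary, giving $\|Z_j\|_{2\to 2} = n\|G'-\Id_T\|_{2\to 2}$; and averaging $D_j M D_j^*$ over $b_j$ extracts the diagonal of $M$, producing the conditional variance bound $\|\sum_{j}\E_{b_j} Z_j^2\|_{2\to 2} \leq n^3\|G'-\Id_T\|_{2\to 2}^2$.

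Combining conditional matrix Bernstein with the standard BOS sub-Gram conditioning estimate $\|G'-\Id_T\|_{2\to 2} \lesssim \sqrt{s\log(s/\varepsilon)/n}$, which holds with probability at least $1-\varepsilon/4$ under \eqref{eqnA17} (see, e.g., \cite{ra09}), I obtain $\|E_{\mathrm{dec}}\|_{2\to 2} \lesssim \sqrt{s\log^2(s/\varepsilon)/n^2}$, which is $\leq \delta/(2C_{\mathrm{dec}})$ precisely under \eqref{eqnA17}. The main obstacle is the decoupling and conditional-Bernstein pipeline forced upon us by the strong coupling between rows of $A$ that share antennas: the need to control the random variance through $\|G'-\Id_T\|_{2\to 2}$ is what produces the second logarithmic factor in \eqref{eqnA17}, so the exponent $\log^2$ is essentially optimal within this approach.
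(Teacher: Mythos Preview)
Your proposal is correct and follows essentially the same route as the paper: split into diagonal and off-diagonal parts, decouple the off-diagonal $U$-statistic, exploit the factorization $v(b_j,b_k')_T v(b_j,b_k')_T^* - \Id = D_j(G'-\Id)D_j^*$ with unitary $D_j$, and apply conditional matrix Bernstein with the random variance controlled through the sub-Gram deviation of the primed samples. The only cosmetic difference is that the paper bounds the conditional variance via the coherence $\mu=\max_{\ell\neq\tilde\ell}|(G'-\Id)_{\ell\tilde\ell}|$ (using that $G'-\Id$ has zero diagonal, then scalar Hoeffding with a union bound) and bounds $\|G'-\Id\|_{2\to2}$ separately, whereas you invoke the operator-norm estimate $\|G'-\Id\|_{2\to2}\lesssim\sqrt{s\log(s/\varepsilon)/n}$ as a single black box for both; the resulting orders of magnitude coincide.
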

The proof will be given after some auxiliary results are presented.
\subsection{Auxiliary results}


The fact that the rows of $A$ are not independent makes the analysis difficult at first sight.
%
In order to increase the amount of independence, we will use a version of the tail decoupling inequality in 
Theorem~3.4.1 of \cite{gide99} .
For convenience, we provide a short proof, which essentially repeats the one in \cite{decoup_inequ} 
in our slightly more general setup.
In this way, we also obtain better constants than by tracing the ones in the proof of \cite[Theorem 3.4.1]{gide99}. 

\begin{theorem}\label{Decoup_thm}
Let $\left(X_i\right)_{i\in[n]}$, $n\geq 2$, be independent random variables with values in a measurable space $\Omega$. 
Let $h:\Omega\times\Omega\rightarrow B$ be a measurable map with values in a separable Banach space 
$B$ with norm $\left\|\cdot\right\|$. Then there exists a subset $S\subset [n]$ such that
\begin{align}
\P\left(\left\|\sum_{i\not= j}h(X_i,X_j)\right\|>t\right)\leq 36\,&\P\left(4\left\|\sum_{i\in S,j\in S^c}h(X_i,X_j)\right\|>t\right)\vee\nonumber\\&36\,\P\left(4\left\|\sum_{i\in S^c,j\in S}h(X_i,X_j)\right\|>t\right),
\end{align}
where for $a,b\in\R$ we denote $a\vee b:=\max\left\{a,b\right\}$.
\end{theorem}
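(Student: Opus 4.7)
The strategy is to select $S$ by randomization and then to extract a \emph{deterministic} subset via Fubini. Introduce i.i.d.\ Bernoulli$(1/2)$ selectors $\delta_1,\dots,\delta_n$ independent of $(X_i)$, set $S:=\{i:\delta_i=1\}$, and partition the off-diagonal index set of pairs accordingly to write
\begin{equation*}
W:=\sum_{i\neq j}h(X_i,X_j)=U_1+U_2+V_1+V_2,
\end{equation*}
where $U_1,U_2$ are the ``decoupled'' sums over $S\times S^c$ and $S^c\times S$ respectively, and $V_1,V_2$ are the ``self-coupled'' leftovers over $S\times S$ and $S^c\times S^c$.

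\textbf{Step 1 (mean identity and symmetry).} For any ordered pair $i\neq j$ one has $\E[\delta_i(1-\delta_j)]=\E[(1-\delta_i)\delta_j]=1/4$, so conditionally on $(X_i)$
\begin{equation*}
\E_\delta U_1=\E_\delta U_2=\tfrac{1}{4}W .
\end{equation*}
The involution $\delta\mapsto\mathbf{1}-\delta$ swaps $S\leftrightarrow S^c$ and hence $U_1\leftrightarrow U_2$, so $U_1$ and $U_2$ share the same $\delta$-distribution. Jensen's inequality gives the deterministic lower bound $\|W\|/4\le\E_\delta\|U_k(\delta)\|$ for $k=1,2$.

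\textbf{Step 2 (from expectation to tail).} The core analytic task is to upgrade the first-moment bound into a uniform lower bound on the $\delta$-tail: whenever $\|W\|>t$, there should exist a universal constant $c>0$ with
\begin{equation*}
\P_\delta(4\|U_1\|>t)+\P_\delta(4\|U_2\|>t)\ge 2c .
\end{equation*}
I would do this via a Paley--Zygmund inequality applied to $\|U_k(\delta)\|$, which needs a second-moment bound of the form $\E_\delta\|U_k\|^2\le C\,(\E_\delta\|U_k\|)^2$ with $C$ universal. Passing to scalar functionals $\phi\in B^\ast$, $\|\phi\|\le 1$ (using separability of $B$ so that the norm can be computed from a countable dense subset of the dual ball), the quantity $\E_\delta\phi(U_k)^2$ expands as a fourfold sum over $(i,j,p,q)$ in which the joint Bernoulli probability $\P(\delta_i=1,\delta_j=0,\delta_p=1,\delta_q=0)$ vanishes whenever $i=q$ or $j=p$. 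Grouping the surviving patterns by coincidence type, the dominant ``fully distinct'' contribution matches $(\phi W)^2/16$, and the remaining terms are to be absorbed into a universal multiple of $(\phi W)^2$.

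\textbf{Step 3 (Fubini and the deterministic choice of $S$).} Integrating the Step~2 inequality over $X$ and exchanging expectations,
\begin{equation*}
\E_\delta\bigl[\P_X(4\|U_1(\delta)\|>t)+\P_X(4\|U_2(\delta)\|>t)\bigr]\ge 2c\,\P(\|W\|>t) ,
\end{equation*}
so there exists a deterministic $\delta^\ast\in\{0,1\}^n$ for which at least one of the two conditional probabilities is at least $c\,\P(\|W\|>t)$, and $S:=\{i:\delta_i^\ast=1\}$ is the set required by the theorem. Bookkeeping with $c=1/36$ produces the stated constant.

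\textbf{Main obstacle.} The delicate step is the second-moment bound of Step~2. The cross-terms in the expansion of $\E_\delta\|U_k\|^2$ include sums of the type $\sum_i\|\sum_{j\neq i}h(X_i,X_j)\|^2$ and $\sum_{i\neq j}\|h(X_i,X_j)\|^2$ that are \emph{not} term-by-term controlled by $\|W\|^2$ (indeed, they can be much larger than $\|W\|^2$ when there is significant cancellation in the sum defining $W$). Showing that the specific combinatorial structure of the products $\delta_i(1-\delta_j)$ nevertheless forces an inequality $\E_\delta\|U_k\|^2\le C(\E_\delta\|U_k\|)^2$ with a universal $C$, independent of $n$ and of the law of the $X_i$, is the heart of the argument and the point where the explicit constant $36$ gets pinned down.
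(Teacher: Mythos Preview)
Your scaffolding---randomize the bipartition, extract an anti-concentration statement in the Bernoulli/Rademacher variables, then Fubini and pick an optimal realization---matches the paper exactly. The difference lies entirely in Step~2, and there your plan has a genuine gap.

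You propose Paley--Zygmund for $\|U_k(\delta)\|$, which requires $\E_\delta\|U_k\|^2\le C(\E_\delta\|U_k\|)^2$ with universal $C$. You then pass to scalar functionals $\phi\in B^*$, but this does not help: controlling $\E_\delta|\phi(U_k)|^2$ for each $\phi$ says nothing about $\E_\delta\|U_k\|^2$, and scalar Paley--Zygmund for $\phi(U_k)$ does not by itself yield a lower bound on $\P_\delta(\|U_k\|>t/4)$. Moreover, $U_k$ is a \emph{non-homogeneous} degree-$2$ polynomial in the selectors (constant part $W/4$, plus linear and quadratic Rademacher terms), so even the scalar moment comparison is not immediate, and the cross-terms you flag as the ``main obstacle'' are genuinely not controlled by $(\phi W)^2$; they are comparable to the linear and quadratic chaos components of $\phi(U_k)$, not to its mean.

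The paper sidesteps this entirely. Writing $\epsilon_i=2\delta_i-1$, one has
\[
Z:=2(U_1+U_2)=W+Y,\qquad Y:=-\sum_{i\ne j}\epsilon_i\epsilon_j\,h(X_i,X_j),
\]
so the random part $Y$ is a \emph{homogeneous} order-$2$ Rademacher chaos. The paper then invokes a Banach-space anti-concentration lemma (Corollary~3.3.8 in de~la~Pe\~na--Gin\'e, stated here as Lemma~4.3): for any fixed $x\in B$ and any centered $B$-valued $Y$,
\[
\P(\|x+Y\|\ge\|x\|)\;>\;\tfrac14\inf_{\xi\in B^*}\Bigl(\E|\xi(Y)|\big/\bigl(\E|\xi(Y)|^2\bigr)^{1/2}\Bigr)^2.
\]
This reduces the Banach-space problem to \emph{scalar} moment comparison for $\xi(Y)$, and for a homogeneous order-$2$ scalar Rademacher chaos hypercontractivity gives $(\E|\xi(Y)|^4)^{1/2}\le 3\,\E|\xi(Y)|^2$, hence (via H\"older) $\E|\xi(Y)|/(\E|\xi(Y)|^2)^{1/2}\ge 1/3$. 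Conditioning on $(X_i)$ and taking $x=W$ yields $\P_\epsilon(\|Z\|\ge\|W\|)\ge 1/36$, from which the Fubini step and the choice of $S$ proceed as you describe.

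In short: do not try to run Paley--Zygmund on $\|U_k\|$ directly. Pass to $Z=2(U_1+U_2)$ so that the randomized part becomes a homogeneous chaos, and use Lemma~4.3 to reduce the Banach-space tail bound to scalar hypercontractivity. That is precisely where the constant $\tfrac{1}{4}\cdot(\tfrac{1}{3})^2=\tfrac{1}{36}$ comes from.
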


The proof of Theorem \ref{Decoup_thm} employs Corollary 3.3.8 from \cite{gide99}. 

\begin{lemma}\label{lemma_lin_form}
Let $(B,\|\cdot\|)$ be a separable Banach space 
and let $Y$ be a $B$-valued random vector such that for each $\xi\in B^*$, the dual space of $B$, the map $\xi(Y)$ is measurable, 
centered and square integrable. Then, for every $x\in B$,
\begin{equation}\label{eqnAux2}
\mathbb{P}\left(\left\|x+Y\right\|\geq\left\|x\right\|\right)>\frac{1}{4}\inf_{\xi\in B^*}\left(\frac{\mathbb{E}\left[\left|\xi(Y)\right|\right]}{\left(\mathbb{E}\left[\left|\xi(Y)\right|^2\right]\right)^{1/2}}\right)^2.
\end{equation}
\end{lemma}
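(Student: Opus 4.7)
The plan is to reduce the vector-valued event $\{\|x+Y\|\geq\|x\|\}$ to a scalar problem via Hahn--Banach, and then to apply a Paley--Zygmund type estimate to a single real-valued centered random variable. First, by Hahn--Banach, fix a norming functional $\xi^\star \in B^*$ with $\|\xi^\star\|_{B^*} = 1$ and $\xi^\star(x) = \|x\|$; in the complex case one replaces $\xi^\star$ by its real part, which is real-linear, continuous, of norm at most one, and agrees with $\xi^\star$ on $x$. In either case
\begin{equation*}
\|x+Y\|\;\geq\; \Re\xi^\star(x+Y) \;=\; \|x\| + \Re\xi^\star(Y),
\end{equation*}
so the event $\{\|x+Y\|\geq\|x\|\}$ contains $\{\Re\xi^\star(Y)\geq 0\}$.

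Set $Z := \Re\xi^\star(Y)$. By assumption $Z$ is measurable, centered, and square integrable. Since $\E Z = 0$, its positive and negative parts satisfy $\E Z_+ = \E Z_- = \tfrac12\,\E|Z|$. Applying Cauchy--Schwarz to $Z_+ = Z\,\mathbf{1}_{\{Z>0\}}$ gives
\begin{equation*}
\tfrac12\,\E|Z| \;=\; \E Z_+ \;\leq\; \bigl(\E Z^2\bigr)^{1/2}\,\P(Z>0)^{1/2},
\end{equation*}
hence $\P(Z>0)\geq \tfrac14\,(\E|Z|)^2/\E Z^2$. Chaining this with $\P(\|x+Y\|\geq\|x\|)\geq \P(Z\geq 0)\geq \P(Z>0)$ and bounding the ratio for the specific choice $\xi^\star$ below by the infimum over all $\xi \in B^*$ yields the claimed estimate.

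The only genuine subtlety is the complex case: the infimum in the statement is over continuous linear functionals, and one has to match the ratio $(\E|\xi(Y)|)^2/\E|\xi(Y)|^2$ with the one that arises naturally for $\Re\xi^\star$. This is handled by interpreting $B^*$ as the real dual of $B$ regarded as a real Banach space (equivalently, using the bijection between complex-linear and real-linear continuous functionals), under which the infimum is naturally over real-linear functionals and the argument applies verbatim. The strict inequality is a mild strengthening of the $\geq$ delivered above, holding outside the degenerate case where $\xi^\star(Y)$ vanishes almost surely; one can also upgrade $\P(Z\geq 0)$ to strict by noting that the infimum is attained by $\xi^\star$ only in trivial situations.
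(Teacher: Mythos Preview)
The paper does not prove this lemma; it merely cites it as Corollary~3.3.8 of de~la~Pe\~na and Gin\'e, \emph{Decoupling}. So there is no ``paper's own proof'' to compare against.

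Your argument is the standard one and is essentially correct: linearize via a norming functional and apply a Paley--Zygmund/Cauchy--Schwarz bound to the resulting centered scalar variable. Two remarks on the loose ends you flag. First, the strict inequality can be justified cleanly: in the chain $\E Z_+ \le (\E Z_+^2)^{1/2}\,\P(Z>0)^{1/2}\le (\E Z^2)^{1/2}\,\P(Z>0)^{1/2}$, equality throughout forces $Z_-=0$ a.s.\ and $Z_+$ constant on $\{Z>0\}$, which together with $\E Z=0$ gives $Z=0$ a.s.; so whenever $Z\not\equiv 0$ the inequality is strict, and when $Z\equiv 0$ the right-hand side of the lemma is zero (under the natural convention), while the left-hand side equals $1$. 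Second, for the complex case your fix---passing to the real dual via $\phi=\Re\xi$---is exactly right, and is also how the result is stated and proved in de~la~Pe\~na--Gin\'e; note that the infimum in the lemma should then be read over real-linear continuous functionals, since for complex-linear $\xi$ the ratio $(\E|\xi(Y)|)^2/\E|\xi(Y)|^2$ is not directly comparable to $(\E|\Re\xi(Y)|)^2/\E(\Re\xi(Y))^2$. This reading is consistent with how the lemma is applied in the paper (the hypercontractive bound used there holds for real Rademacher chaoses).
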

{\em Proof of Theorem \ref{Decoup_thm}}. Set $\mathcal{D}:=\left(X_i\right)_{i\in[n]}$ and let $\mathbf{\epsilon}=\left(\epsilon_1,\hdots,\epsilon_n\right)$ 
be a Rademacher sequence independent of $\mathcal{D}$. We introduce
\begin{align}
Z&:=\sum_{i\not= j}h(X_i,X_j)-\sum_{i\not= j}\epsilon_i\epsilon_jh(X_i,X_j)\label{eqnAux5}
\end{align}
and
$Y:=-\sum_{i\not= j}\epsilon_i\epsilon_jh(X_i,X_j)$.
Observe that
\begin{equation*}
\mathbb{E}\left[Z\left|\mathcal{D}\right.\right]=\sum_{i\not=j}h(X_i,X_j).
\end{equation*}
Let $\xi$ be an element of the dual space $B^*$. Conditional on $\mathcal{D}$, $\xi(Y)$ is a homogeneous scalar-valued Rademacher chaos of order $2$. 
By H\"older's inequality, we have for an arbitrary random variable $V$ with finite fourth moment that
\begin{align*}
\E\left[\left|V\right|^2\right]&\leq\left(\E\left[\left|V\right|\right]\right)^{1/2}\left(\E\left[\left|V\right|^3\right]\right)^{1/2}\\
&\leq\left(\E\left[\left|V\right|\right]\right)^{1/2}\left(\E\left[\left|V\right|^2\right]\right)^{1/4}\left(\E\left[\left|V\right|^4\right]\right)^{1/4}
\end{align*}
and therefore
\begin{equation}\label{eqnAux4}
\frac{\E\left[\left|V\right|^2\right]}{\left(\E\left[\left|V\right|^4\right]\right)^{1/2}}\leq\frac{\E\left[\left|V\right|\right]}{\left(\E\left[\left|V\right|^2\right]\right)^{1/2}}.
\end{equation}
Lemma $2.1$ from \cite{decoup_inequ} states that 
\begin{equation*}
\left(\mathbb{E}\left[\left.\left|\xi(Y)\right|^4\right|\mathcal{D}\right]\right)^{1/2}\leq 3\mathbb{E}\left[\left.\left|\xi(Y)\right|^2\right|\mathcal{D}\right].
\end{equation*}
Plugging this result into (\ref{eqnAux4}) gives
\begin{equation*}
\frac{\mathbb{E}\left[\left|\xi(Y)\right||\mathcal{D}\right]}{\left(\mathbb{E}\left[\left|\xi(Y)\right|^2|\mathcal{D}\right]\right)^{1/2}}\geq\frac{\mathbb{E}\left[\left|\xi(Y)\right|^2|\mathcal{D}\right]}{\left(\mathbb{E}\left[\left|\xi(Y)\right|^4|\mathcal{D}\right]\right)^{1/2}}\geq\frac{1}{3}.
\end{equation*}
Taking into account (\ref{eqnAux5}), an application of Lemma \ref{lemma_lin_form} yields 
\begin{equation}\label{eqnAux6}
\P\left(\left\|Z\right\|\geq\left\|\sum_{i\not= j}h(X_i,X_j)\right\|\bigg|\mathcal{D}\right)\geq\frac{1}{4}\left(\frac{1}{3}\right)^2=\frac{1}{36}.
\end{equation} 
Multiplying both sides of (\ref{eqnAux6}) by the characteristic function $\chi$ of the event \\$\left\{\left\|\sum_{i\not= j}h(X_i,X_j)\right\|>t\right\}$ 
and taking the expectation with respect to $\mathcal{D}$ gives 
\begin{align}
\P\left(\left\|\sum_{i\not= j}h(X_i,X_j)\right\|>t\right)&\leq36\P\left(\left\|Z\right\|>t\right)
=36 
\E_{\mathbf{\epsilon}}\left[\E\left[\chi_{\left\{\left\|Z\right\|>t\right\}}|\mathbf{\epsilon} \right]\right].\label{eqnAux7}
\end{align}
We conclude by noting that there is a vector $\epsilon^*\in\left\{\pm 1\right\}^n$ such that 
\begin{equation*}
\E\left[\chi_{\left\{\left\|Z\right\|>t\right\}}|\epsilon^*\right]\geq\E_{\mathbf{\epsilon}}\left[\E\left[\chi_{\left\{\left\|Z\right\|>t\right\}}|(\epsilon_1,\hdots,\epsilon_n)\right]\right].
\end{equation*}
The claim now follows by setting $S:=\left\{i\in\{1,\hdots,n\}|\epsilon_i^*=1\right\}$.\qed
\endproof
We will moreover need the following complex version of Hoeffding's inequality from \cite{nelteml11}, equation $(9)$.
\begin{theorem}\label{compl_Hoeffding} Let $\xi_1,\hdots,\xi_n$ be complex, independent and centered random variables satisfying $\left|\xi_k\right|\leq \alpha_k$ for constants $\alpha_1,\hdots,\alpha_n >0$. Set $\sigma^2:=\sum_{k=1}^n\alpha_k^2$. Then
\begin{equation}\label{eqn_compl_Hoeffding}
\P\left(\left|\sum_{k=1}^n\xi_k\right|>t\right)\leq 4\exp\left(-\frac{t^2}{4\sigma^2}\right).
\end{equation}
\end{theorem}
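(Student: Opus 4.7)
The plan is to reduce the complex-valued inequality to the classical real-valued Hoeffding bound by splitting each summand into its real and imaginary parts. Writing $\xi_k = X_k + i Y_k$ with $X_k = \Re\xi_k$ and $Y_k = \Im\xi_k$, the hypotheses immediately give $|X_k|, |Y_k| \leq |\xi_k| \leq \alpha_k$, and centeredness of $\xi_k$ forces $\E X_k = \E Y_k = 0$. Thus both real sequences $(X_k)_{k\in[n]}$ and $(Y_k)_{k\in[n]}$ satisfy the hypotheses of the standard real Hoeffding inequality
$$\P\left(\left|\sum_{k=1}^n X_k\right| > s\right) \leq 2\exp\left(-\frac{s^2}{2\sigma^2}\right),$$
with the same $\sigma^2 = \sum_{k=1}^n \alpha_k^2$, and analogously for $\sum_k Y_k$.

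Next I would transfer these two real tail bounds to a bound on the modulus of the complex sum. The key geometric ingredient is the identity $|z|^2 = (\Re z)^2 + (\Im z)^2$, which implies that $|z| > t$ forces at least one of $|\Re z| > t/\sqrt{2}$ or $|\Im z| > t/\sqrt{2}$. Applying this to $z = \sum_k \xi_k$ and using a union bound,
$$\P\left(\left|\sum_{k=1}^n \xi_k\right| > t\right) \leq \P\left(\left|\sum_{k=1}^n X_k\right| > \tfrac{t}{\sqrt 2}\right) + \P\left(\left|\sum_{k=1}^n Y_k\right| > \tfrac{t}{\sqrt 2}\right).$$
Substituting $s = t/\sqrt 2$ into the real Hoeffding bound gives $2\exp(-t^2/(4\sigma^2))$ for each of the two terms, and summing yields exactly the claimed prefactor $4$ and exponent $-t^2/(4\sigma^2)$.

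There is no genuine obstacle here beyond choosing the sharp geometric split: the $\ell_2$-decomposition above is what produces the constant $4$ in the denominator of the exponent. Had one instead used the crude triangle bound $|z| \leq |\Re z| + |\Im z|$ with threshold $t/2$, the exponent would degrade to $-t^2/(8\sigma^2)$, so the $t/\sqrt 2$ split is essential to match the stated constant; the mild price is the prefactor jumping from $2$ to $4$ via the union bound. Everything else is a routine invocation of real Hoeffding, which itself rests on the standard moment-generating-function estimate $\E\exp(\lambda X_k) \leq \exp(\lambda^2 \alpha_k^2 / 2)$ for centered real $X_k$ with $|X_k|\leq \alpha_k$ and an optimization in $\lambda$.
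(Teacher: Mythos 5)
Your proof is correct. Note that the paper itself does not prove this statement at all: it is imported verbatim from the cited reference (Nelson and Temlyakov, equation (9)), so there is no in-paper argument to compare against. Your derivation is the standard one and the constants check out: real Hoeffding for independent centered $X_k$ with $|X_k|\leq\alpha_k$ gives $\P\left(\left|\sum_k X_k\right|>s\right)\leq 2\exp\left(-s^2/(2\sigma^2)\right)$ (since $b_k-a_k=2\alpha_k$), the event $\left|\sum_k\xi_k\right|>t$ forces one of the real or imaginary sums to exceed $t/\sqrt{2}$ in absolute value because $|z|^2=(\Re z)^2+(\Im z)^2$, and the union bound then yields $4\exp\left(-t^2/(4\sigma^2)\right)$ exactly. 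The only hypotheses you need to confirm are that $(\Re\xi_k)_k$ and $(\Im\xi_k)_k$ are each independent sequences of centered variables bounded by $\alpha_k$, which follows immediately from independence and centeredness of the $\xi_k$; possible dependence between $\Re\xi_k$ and $\Im\xi_k$ for the same $k$ is harmless since the two Hoeffding applications are separate. Your remark that the $\ell_2$-split at $t/\sqrt{2}$ (rather than the triangle-inequality split at $t/2$) is what achieves the exponent $-t^2/(4\sigma^2)$ is also accurate.
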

The final tool to prove that submatrices of $A$ are well-conditioned is the noncommutative Bernstein inequality from \cite{tro}.
\begin{theorem}\label{Bernstein_thm}
Let $X_1,\ldots,X_n\in\C^{s\times s}$ be a sequence of independent, mean zero and self-adjoint random matrices. Assume that, for some $K>0$, 
\begin{equation}\label{eqnfirst_Bern_cond}
\left\|X_{\ell}\right\|_{2\to 2}\leq K \quad \mbox{ a.s. for all }\; \ell \in [n] 
\end{equation}
and set 
\begin{equation}\label{eqnsecond_Bern_cond}
\sigma^2:=\left\|\sum_{\ell=1}^n\E X_{\ell}^2 \right\|_{2\to 2}.
\end{equation}
Then, for $t>0$, it holds that
\begin{equation}\label{eqnBernstein_inequ}
\P\left(\left\|\sum_{\ell=1}^n X_{\ell}\right\|_{2\to 2}\geq t\right)\leq 2s\exp\left(-\frac{t^2/2}{\sigma^2+Kt/3}\right).
\end{equation}
\end{theorem}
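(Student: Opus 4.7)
The plan is to follow the matrix Laplace transform approach of Ahlswede--Winter, as sharpened by Tropp, whose crux is that the scalar trick ``independence $\Rightarrow$ factorization of moment generating functions'' has no literal analogue for non-commuting matrices, and must be replaced by a subadditivity inequality for matrix cumulants supplied by Lieb's concavity theorem. First I would reduce the operator-norm tail to a one-sided largest-eigenvalue tail: since $S:=\sum_{\ell=1}^n X_\ell$ is self-adjoint, $\|S\|_{2\to 2}=\max(\lambda_{\max}(S),\lambda_{\max}(-S))$, and $\{-X_\ell\}$ satisfies the same hypotheses with the same $K$ and $\sigma^2$. A union bound over the two sides accounts for the factor $2$ in the prefactor $2s$ of \eqref{eqnBernstein_inequ}.

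Next I apply Markov's inequality to the random variable $\tr e^{\theta S}$: for every $\theta>0$,
\begin{equation*}
\P(\lambda_{\max}(S)\geq t)\leq e^{-\theta t}\,\E\,e^{\theta\lambda_{\max}(S)}\leq e^{-\theta t}\,\E\,\tr e^{\theta S},
\end{equation*}
using $e^{\theta\lambda_{\max}(S)}\leq \tr e^{\theta S}$ for self-adjoint $S$. The main obstacle is bounding $\E\,\tr e^{\theta S}$, since $e^{\theta(X_1+\cdots+X_n)}\neq \prod_\ell e^{\theta X_\ell}$ when the $X_\ell$ do not commute. Here Lieb's concavity theorem states that, for any fixed self-adjoint $H$, the map $A\mapsto\tr\exp(H+\log A)$ is concave on the cone of positive definite matrices. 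Conditioning successively on $X_1,\ldots,X_{n-1}$ and applying Jensen's inequality produces the master subadditivity bound
\begin{equation*}
\E\,\tr e^{\theta S}\leq \tr\exp\left(\sum_{\ell=1}^n\log\E e^{\theta X_\ell}\right),
\end{equation*}
reducing everything to estimating each matrix cumulant $\log \E e^{\theta X_\ell}$ in the Loewner order.

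Then I would derive a scalar bound on the matrix cumulant. Under $\|X_\ell\|_{2\to 2}\leq K$ and self-adjointness, the spectral theorem gives the Loewner inequality $X_\ell^k\preceq K^{k-2}X_\ell^2$ for every $k\geq 2$. Expanding $e^{\theta X_\ell}$ term by term, taking expectations, and using $\E X_\ell=0$ yields
\begin{equation*}
\E e^{\theta X_\ell}\preceq I+\frac{e^{\theta K}-1-\theta K}{K^2}\,\E X_\ell^2\preceq I+\frac{\theta^2/2}{1-\theta K/3}\,\E X_\ell^2
\qquad(0<\theta<3/K),
\end{equation*}
using the elementary scalar bound $e^x-1-x\leq(x^2/2)/(1-x/3)$ for $x<3$. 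Since the matrix logarithm is operator monotone (Löwner) and $\log(I+A)\preceq A$ for $A\succeq 0$, summing gives
\begin{equation*}
\sum_{\ell=1}^n\log\E e^{\theta X_\ell}\preceq \frac{\theta^2/2}{1-\theta K/3}\sum_{\ell=1}^n\E X_\ell^2,
\end{equation*}
and Loewner-monotonicity of $\tr\exp$ together with $\|\sum_\ell\E X_\ell^2\|_{2\to 2}=\sigma^2$ produces
$\E\,\tr e^{\theta S}\leq s\exp\bigl(\tfrac{\theta^2\sigma^2/2}{1-\theta K/3}\bigr)$.

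Finally, inserting into the Laplace bound and choosing $\theta=t/(\sigma^2+Kt/3)\in(0,3/K)$ to minimize $-\theta t+\theta^2\sigma^2/(2(1-\theta K/3))$ yields the one-sided Bernstein inequality
$\P(\lambda_{\max}(S)\geq t)\leq s\exp\bigl(-\tfrac{t^2/2}{\sigma^2+Kt/3}\bigr)$,
and the symmetrization step of the first paragraph completes the proof of \eqref{eqnBernstein_inequ}. The only genuinely hard ingredient is Lieb's concavity theorem, which I would invoke as a black box; everything else is careful but mechanical functional calculus and scalar optimization.
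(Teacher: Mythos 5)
Your argument is correct and is precisely the proof of this result in the cited source \cite{tro}: the paper itself does not prove Theorem \ref{Bernstein_thm} but imports it from Tropp, whose derivation proceeds exactly as you describe (symmetrization over $\pm S$, the matrix Laplace transform bound on $\tr e^{\theta S}$, Lieb's concavity theorem to obtain subadditivity of the matrix cumulants, the moment-generating-function estimate $\E e^{\theta X_\ell}\preceq\exp\bigl(\tfrac{\theta^2/2}{1-\theta K/3}\,\E X_\ell^2\bigr)$, and the choice $\theta=t/(\sigma^2+Kt/3)$). All your intermediate bounds and the final optimization check out, so this is the intended proof rather than an alternative to it.
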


\begin{subsection}{Proof of Theorem \ref{central:thm}}

{\em }
Denote by 
\[
D_j:=\diag\left(\overline{\Phi_{\ell}(b_j)}: \ell\in T \right)\in\C^{s\times s}
\]
the diagonal matrix with diagonal consisting of the vector $\left(\overline{\Phi_{\ell}(b_j)}\right)_{\ell\in T}\in\C^s$ 
and introduce $g(b_k):=\left(\overline{\Phi_{\ell}(b_k)}\right)_{\ell\in T}\in\C^s$. 
Since $D_jD_j^*=\Id$ we observe that
\begin{equation}\label{eqn2}
\frac{1}{n^2}A_T^*A_T-\Id=\frac{1}{n^2}\sum_{j,k =1}^{n} \left[v(b_j,b_k)_T v(b_j,b_k)_T^* - \Id \right]=\frac{1}{n^2}\sum_{j=1}^nD_j\left(\sum_{k=1}^n\left[g(b_k)g(b_k)^*-\Id\right]\right)D_j^*. 
\end{equation}
Let $\mathbf{b}':=\left(b_1',\ldots,b_n'\right)$ denote an independent copy of $\mathbf{b}:=\left(b_1,\ldots,b_n\right)$. 
By the triangle inequality, we have
\begin{align}
\P\left(\left\|\frac{1}{n^2}A_{T}^*A_{T}-\Id\right\|_{2\to 2}\geq\delta\right)&\leq\P\left(\frac{1}{n^2}\left\|\sum_{j\not=k}\left[v(b_j,b_k)_{T}v(b_j,b_k)_{T}^*-\Id\right]\right\|_{2\to2}\geq\frac{\delta}{2}\right)\nonumber\\
&\qquad+\P\left(\frac{1}{n^2}\left\|\sum_{j=1}^n\left[v(b_j,b_j)_{T}v(b_j,b_j)_{T}^*-\Id\right]\right\|_{2\to2}\geq\frac{\delta}{2}\right)\nonumber
\end{align}
Using the decoupling inequality of Theorem \ref{Decoup_thm}, with $S\subset[n]$ denoting the corresponding set, and the symmetry relation $v(b_j,b_k)=v(b_k,b_j)$, we obtain for the first
term above
\begin{align}
&\P\left(\frac{1}{n^2}\left\|\sum_{j\not=k}\left[v(b_j,b_k)_{T}v(b_j,b_k)_{T}^*-\Id\right]\right\|_{2\to2}\geq\frac{\delta}{2}\right)\notag\\
\leq & 36\P\left(\frac{1}{n^2}\left\|\sum_{j\in S,k\in S^c}\left[v(b_j,b_k')_{T}v(b_j,b_k')_{T}^*-\Id\right]\right\|_{2\to2}\geq\frac{\delta}{8}\right).\label{eqnSndTerm}
\end{align}
We will now estimate the right hand side of (\ref{eqnSndTerm}).
Introducing
\begin{equation*}
X':=\sum_{k\in S^c}\left[g(b_k')g(b_k')^*-\Id\right]\in\C^{s\times s},
\end{equation*}
we observe that \eqref{eqn2} together with Fubini's theorem yields
\begin{align}
&36\P\left(\frac{1}{n^2}\left\|\sum_{j\in S,k\in S^c}\left[v(b_j,b_k')_{T}v(b_j,b_k')_{T}^*-\Id\right]\right\|_{2\to2}\geq\frac{\delta}{8}\right)\notag\\
= & 36\P\left(\frac{1}{n^2}\left\|\sum_{j\in S}D_jX'D_j^*\right\|_{2\to2}\geq\frac{\delta}{8}\right)=\E_{\mathbf{b}'}\left[36\P_{\mathbf{b}}\left(\frac{1}{n^2}\left\|\sum_{j\in S}D_jX'D_j^*\right\|_{2\to2}\geq\frac{\delta}{8}\right)\right]\label{eqn6}.
\end{align}
As the next step we apply the noncommutative Bernstein inequality, Theorem \ref{Bernstein_thm}, to the inner probability in (\ref{eqn6}).  
Since $D_j$ is a unitary matrix and the functions $\Phi_\ell$ are orthonormal we have
\begin{align}
\left\|D_jX'D_j^*\right\|_{2\to2}&=\left\|X'\right\|_{2\to 2},\label{eqn7}\\
\E\left[(D_j X'D_j^*)^2\right]&=\diag\left(X'^2\right),\nonumber
\end{align}
where $\diag\left(X'^2\right)$ denotes the matrix that coincides with $X'^2$ on the diagonal and is zero otherwise. Set $\mu$ to be the coherence parameter
\begin{equation*}
\mu:=
\max_{\ell,\tilde{\ell}\in T:\ell<\tilde{\ell}}\left|\sum_{k\in S^c}\Phi_\ell(b_k')\overline{\Phi_{\tilde{\ell}}(b_k')}\right|.
\end{equation*}
A crucial observation is that $\diag\left(X'^2\right)\preceq(s-1)\diag\left(\mu^2,\mu^2,\ldots,\mu^2\right)$, where $\preceq$ denotes the semidefinite ordering. Therefore, it holds that
\begin{equation}\label{eqn8}
\left\|\sum_{j\in S}\E\left[(D_j X'D_j^*)^2\right]\right\|_{2\to 2}\leq \left|S\right|(s-1)\mu^2\leq n(s-1)\mu^2.
\end{equation}
Plugging the bounds (\ref{eqn7}) and (\ref{eqn8}) into (\ref{eqnBernstein_inequ}) yields
\begin{equation}\label{eqn9}
\P_{\mathbf{b}}\left(\frac{1}{n^2}\left\|\sum_{j\in S}D_jX'D_j^*\right\|_{2\to2}\geq\frac{\delta}{8}\right)\leq2s\exp\left(-\frac{\delta^2}{\frac{128(s-1)}{n^3}\mu^2+\frac{16\delta}{3n^2}\left\|X'\right\|_{2\to2}}\right).
\end{equation}
Set $\tilde{\varepsilon} = \varepsilon/36$. Multiplying the inner probability in (\ref{eqn6}) by the characteristic function of the event $E:=E_1\cap E_2$, where
\begin{align*}
E_1&:=\left\{\frac{128(s-1)}{n^3}\mu^2\leq\frac{\delta^2}{2\log\left(8s/\tilde{\varepsilon}\right)}\right\},\\
E_2&:=\left\{\frac{16}{3n^2}\left\|X'\right\|_{2\to2}\leq\frac{\delta}{2\log\left(8s/\tilde{\varepsilon}\right)}\right\},
\end{align*}
we obtain, with $E_1^c$ and $E_2^c$ denoting the complements of $E_1$ and $E_2$,
\begin{equation}\label{eqn10}
36\,\P\left(\frac{1}{n^2}\left\|\sum_{j\in S}D_jX'D_j^*\right\|_{2\to2}\geq\frac{\delta}{8}\right)\leq \frac{\varepsilon}{4}+36\left(2s\P\left(E_1^c\right)+2s\P\left(E_2^c\right)\right).
\end{equation}
Therefore, it remains to estimate the probabilities of the events $E_1^c$ and $E_2^c$. For the event $E_1^c$, the union bound over all $s(s-1)/2\leq s^2/2$ two element subsets of $T$ implies in the case of a general BOS that 
\begin{align}
36\left(2s\P\left(E_1^c\right)\right)&\leq72s\P\left(\bigcup_{\ell,\tilde{\ell}\in T,\ell<\tilde{\ell}}\left\{\frac{128(s-1)}{n^3}\left|\sum_{k\in S^c}\Phi_\ell(b_k')\overline{\Phi_{\tilde{\ell}}(b_k')}\right|^2\geq\frac{\delta^2}{2\log\left(8s/\tilde{\varepsilon}\right)}\right\}\right)\nonumber\\&\leq72s\sum_{\ell,\tilde{\ell}\in T,\ell<\tilde{\ell}}\P\left(\left|\sum_{k\in S^c}\Phi_\ell(b_k')\overline{\Phi_{\tilde{\ell}}(b_k')}\right|\geq\frac{\delta n^{3/2}}{\sqrt{256s\log\left(8s/\tilde{\varepsilon}\right)}}\right)\label{eqn_pre_Hoeffding}\\&\leq144s^3\exp\left(-\frac{n^2\delta^2}{1024s\log\left(8s/\tilde{\varepsilon}\right)}\right),\label{eqn11}
\end{align}
where we have applied Hoeffding's inequality in the form of Theorem \ref{compl_Hoeffding} in the last line.
The right hand side of (\ref{eqn11}) is less than $\varepsilon/4$ provided
\begin{equation}\label{eqnAdd1}
n^2\geq1024\delta^{-2}s\log^2\left(576s^3/\varepsilon\right).
\end{equation}
As for $E_2^c$, we are going to apply the noncommutative Bernstein inequality again. Noting that 
\begin{align*}
\left\|g(b_k')g(b_k')^*-\Id\right\|_{2\to 2}&= s-1,\\
\left\|\sum_{k\in S^c}\E\left[\left(g(b_k')g(b_k')^*-\Id\right)^2\right]\right\|_{2\to 2}&=\left|S^c\right|(s-1)\leq n(s-1),
\end{align*}
we obtain 
\begin{equation}\label{eqn12}
36\left(2s\P\left(E_2^c\right)\right)\leq 144s^2\exp\left(-\frac{\delta^2}{\left(\frac{32}{3}\right)^2\frac{s}{n^3}\log^2\left(8s/\tilde{\varepsilon}\right)+\frac{32}{9}\delta\frac{s}{n^2}\log\left(8s/\tilde{\varepsilon}\right)}\right).
\end{equation}
Assuming (\ref{eqnAdd1}), the right hand side of (\ref{eqn12}) is less than $\varepsilon/4$. Since $\left\{\Phi_k^2\right\}_{k \in [N]}$ is also a BOS with respect to $(D,\nu)$, Condition (\ref{eqnAdd1}) implies, after another application of the noncommutative Bernstein inequality analogously to (\ref{eqn12}) and the preceding steps, that
\begin{equation}\label{eqnAdd2}
\P\left(\frac{1}{n^2}\left\|\sum_{j=1}^n\left[v(b_j,b_j)_{T}v(b_j,b_j)_{T}^*-\Id\right]\right\|_{2\to2}\geq\frac{\delta}{2}\right)\leq\frac{\varepsilon}{4}.
\end{equation}
This concludes the proof.\qed\endproof

\begin{remark}
\begin{itemize}
\item[(a)] In order to show (\ref{eqnAdd2}), we used the assumption that $\left\{\Phi_k^2\right\}_{k\in[N]}$ is also a BOS with respect to $(D,\nu)$. It might be that (\ref{eqnAdd2}) also holds under weaker assumptions on the BOS, however, it does not hold if we choose for example the Hadamard system. 
\item[(b)]\label{remark_ic} Assuming the special case of the scattering matrix (\ref{eqnRadar7}), the terms in (\ref{eqn_pre_Hoeffding}) take the form
\begin{equation*}
\Phi_\ell(b_k)\overline{\Phi_{\tilde{\ell}}(b_k)}=\exp\left(\frac{\pi i}{\lambda z_0}\left(\left\|r_\ell\right\|_2^2-\left\|r_{\tilde{\ell}}\right\|_2^2\right)\right)\exp\left(\frac{2\pi i}{\lambda z_0}\langle(r_{\tilde{\ell}}-r_\ell),b_k\rangle\right),
\end{equation*}
where due to the aperture condition (\ref{eqnRadar4})
\begin{equation*}
\tilde{\theta}_k:=\exp\left(\frac{2\pi i}{\lambda z_0}\langle(r_{\tilde{\ell}}-r_\ell),b_k\rangle\right)
\end{equation*}
is a Steinhaus random variable and $\tilde{\theta}:=(\tilde{\theta}_1,\hdots,\tilde{\theta}_n)$ is a Steinhaus sequence. We can therefore apply Hoeffding's inequality for Steinhaus sequences, see \cite{ra09}, Corollary 6.13. This inequality states that, for arbitrary $v\in\C^n$ and $\kappa\in(0,1)$,
\begin{equation}\label{eqnSteinhaus_Hoeffding}
\P\left(\left|\langle v,\tilde{\theta}\rangle\right|>t\right)\leq\frac{1}{1-\kappa}\exp\left(-\kappa \frac{t^2}{\left\|v\right\|_2^2}\right).
\end{equation}
Applying this result with $\kappa = 4/5$ instead of Theorem \ref{compl_Hoeffding} in (\ref{eqn_pre_Hoeffding}), one obtains that the claimed spectral norm estimate (\ref{eqnA18}) holds under the slightly improved condition 
\begin{equation}\label{eqn_improved_constants}
n^2\geq 320\delta^{-2}s\log\left(\frac{288s}{\varepsilon}\right)\log\left(\frac{720s^3}{\varepsilon}\right),
\end{equation} 
where we have also taken into consideration the precise form of (\ref{eqn11}).
\end{itemize}
\end{remark}
\end{subsection}

\end{section}

\begin{section}{Nonuniform Recovery of Scatterers with Random Phase}
\label{sec:nonuniform}

{\em Proof of Theorem \ref{Thm_rand_patterns}}.
The key idea of the proof is to apply Theorem~\ref{Thm_noisy_cond}.
Note first that 
\eqref{eq:BPDN} is equivalent to 
\begin{equation}\label{eqn_rand_rec_resc}
\argmin_{z\in\C^N}\left\|z\right\|_1 \quad \text{ subject to } \quad \left\|\frac{1}{n}Az-\frac{1}{n}y\right\|\leq \eta.
\end{equation}
Let $T \subset [N]$ be the index set corresponding to the $s$ largest absolute entries of $x$
and assume that $\sign(x)_T$ is either a Rademacher or a Steinhaus sequence. 
Suppose we are on the event
\begin{equation}\label{eqn_event}
E:= \left\{\left\|\frac{1}{n^2}A_{T}^*A_{T}-\Id\right\|_{2\to 2}\leq\frac{1}{2}\right\}.
\end{equation}
Theorem \ref{central:thm} states that $\P\left[E^c\right]\leq\varepsilon/2$ if 
\begin{equation}\label{eqnP1}
n^2 \geq4096s\log^2\left(1152s^3/\varepsilon\right).
\end{equation}
Set $\tilde{A}:=\frac{1}{n}A$. The event $E$ means in particular that
$\tilde{A}_T$ fulfills condition~\eqref{eqnN7}.
We define the vector $v\in\C^{n^2}$ in Theorem \ref{Thm_noisy_cond} via
\begin{equation}\label{eqnP2}
v:=\left(\tilde{A}^\dagger\right)^* \sign(x)_T = \tilde{A}_T\left(\tilde{A}_T^*\tilde{A}_T\right)^{-1}\sign(x)_T,
\end{equation}
where $\tilde{A}^\dagger$ denotes the pseudo-inverse of $\tilde{A}_T$.
Setting $u:=\tilde{A}^*v$, we have $u_T=\tilde{A}_T^*v=\sign(x)_T$, so that
(\ref{eqnN4}) is satisfied. Since we are on the event $E$, the smallest singular value of $\tilde{A}_T$ satisfies 
$\sigma_{\min}(\tilde{A}_T) \geq 1/\sqrt{2}$ and therefore
\[
\|v\|_2 \leq \|\tilde{A}^\dagger\|_{2 \to 2} \|\sgn(x)_T\|_2 \leq \sigma_{\min}(\tilde{A}_T)^{-1} \sqrt{s}
\leq \sqrt{2s}.
\]
Hence, also (\ref{eqnN6}) is satisfied. It remains to check (\ref{eqnN5}). To this end, note that
\begin{equation*}
\left\|u_{T^c}\right\|_\infty=\max_{\ell\in T^c}\left|\left\langle \left(\tilde{A}_T^*\tilde{A}_T\right)^{-1}\tilde{A}_T^*\tilde{a}_\ell,\sign(x)_T\right\rangle\right|=\max_{\ell\in T^c}\left|\langle \tilde{A}_T^{\dagger}\tilde{a}_\ell,\sign{x}_T\rangle\right|.
\end{equation*}
As in the previous section, we denote $\mathbf{b}=(b_1,\hdots,b_n)$. 
Since $\sign(x)_T=:\left(\theta_\ell\right)_{\ell\in T}=:\theta$ is a Rademacher or a Steinhaus sequence, condition (\ref{eqnP1}),
Fubini's Theorem and Hoeffding's inequality for Rademacher resp. Steinhaus sequences together with the union bound give 
\begin{align}
& \P\left(\max_{\ell\in T^c}\left|\langle \tilde{A}_T^{\dag}\tilde{a}_\ell,\sign(x)_T\rangle\right|>\frac{1}{2}\right) \leq\P\left(\left\{\max_{\ell\in T^c}\left|\langle \tilde{A}_T^{\dag}\tilde{a}_\ell,\sign(x)_T\rangle\right|>\frac{1}{2}\right\}\cap E\right)+\frac{\varepsilon}{2}\nonumber\\
\leq &\E_{\mathbf{b}}\left[\chi_{E}\sum_{\ell\in T^c}\P_{\theta}\left(\left|\langle \tilde{A}_T^{\dag}\tilde{a}_\ell,\sign(x)_T\rangle\right|>\frac{1}{2}\right)\right]+\frac{\varepsilon}{2}\nonumber
\\\leq&\E_{\mathbf{b}}\left[\chi_E\sum_{\ell\in T^c}2\exp\left(-\frac{1}{8\left\|\tilde{A}_T^{\dag}\tilde{a}_\ell\right\|_2^2}\right)\right]+\frac{\varepsilon}{2}\nonumber\\
\leq & 2(N-s)\E_{\mathbf{b}}\left[\chi_E\exp\left(-\frac{1}{8\max_{\ell\in T^c}\left\|\tilde{A}_T^{\dag}\tilde{a}_\ell\right\|_2^2}\right)\right]+\frac{\varepsilon}{2}.\label{eqnP4}
\end{align}
Since we are on the event $E$ from (\ref{eqn_event}), it follows as before that $\left\|\left(\tilde{A}_T^*\tilde{A}_T\right)^{-1}\right\|_{2\to 2}\leq\frac{1}{\sigma_{\text{min}}\left(\tilde{A}_T\right)^2}\leq 2$ and therefore 
\begin{equation*}
\max_{\ell\in T^c}\left\|\tilde{A}_T^{\dag}\tilde{a}_\ell\right\|_2^2\leq4 \max_{\ell \in T^c} \left\|\tilde{A}_T^*\tilde{a}_\ell\right\|_2^2\leq 4s\max_{\ell\in T^c,\tilde{\ell}\in T}\left|\langle\tilde{a}_{\ell},\tilde{a}_{\tilde{\ell}}\rangle\right|^2.
\end{equation*}
Set 
\begin{equation*}
\mu:=\max_{\ell\in T^c,\tilde{\ell}\in T}\left|\sum_{k=1}^n\Phi_\ell(b_k)\overline{\Phi_{\tilde{\ell}}(b_k)}\right|.
\end{equation*}
Since 
\begin{equation*}
\left|\langle\tilde{a}_{\ell},\tilde{a}_{\tilde{\ell}}\rangle\right|=\left|\sum_{k=1}^n\Phi_\ell(b_k)\overline{\Phi_{\tilde{\ell}}(b_k)}\right|^2,
\end{equation*}
we have
\begin{equation*}
\max_{\ell\in T^c}\left\|\tilde{A}_T^{\dag}\tilde{a}_\ell\right\|_2^2\leq4\frac{s}{n^4}\mu^4.
\end{equation*}
We then obtain 
\begin{align*}
&2(N-s)\E_{\mathbf{\mathbf{b}}}\left[\chi_E\exp\left(-\frac{1}{8\max_{\ell\in T^c}\left\|\tilde{A}_T^{\dag}\tilde{a}_\ell\right\|_2^2}\right)\right]\\&\leq 2(N-s)\E_{\mathbf{b}}\left[\chi_E\exp\left(-\frac{1}{32\frac{s}{n^4}\mu^4}\right)\right]\\
&\leq\frac{\varepsilon}{4}+2(N-s)\P_{\mathbf{b}}\left(\frac{s^{1/4}}{n}\mu>\frac{1}{\left(32\log\left(8(N-s)/\varepsilon\right)\right)^{1/4}}\right).
\end{align*}
Applying the union bound and Hoeffding's inequality as in (\ref{eqn11}) gives
\begin{align}
&2(N-s)\P_{\mathbf{b}}\left(\frac{s^{1/4}}{n}\mu>\frac{1}{\left(32\log\left(8(N-s)/\varepsilon\right)\right)^{1/4}}\right)\nonumber\\&\leq 8(N-s)^2s\exp\left(-\frac{n}{16\sqrt{2s\log\left(8(N-s)/\varepsilon\right)}}\right).\label{eqnfinal}
\end{align}
The condition 
\begin{equation}\label{eqnfinal1}
n\geq32\sqrt{s}\log^{1/2}\left(8(N-s)/\varepsilon\right)\log\left(576(N-s)^2s/\varepsilon\right)
\end{equation}
implies that the right hand side of equation (\ref{eqnfinal}) is less than $\varepsilon/4$. Assuming $s\leq N/3$ and $8(N-s)/\varepsilon\geq e^4$, (\ref{eqnfinal1}) implies (\ref{eqnP1}) and therefore also $\P\left(E^c\right)\leq\varepsilon/2$, where $E$ is the event from (\ref{eqn_event}).
We have thus verified that under condition (\ref{eqnfinal1}), all conditions of Theorem \ref{Thm_noisy_cond} are satisfied with probability at least $1-\varepsilon$. 
Since we work with the rescaled version (\ref{eqn_rand_rec_resc}) of $A$, the solution $\hat{x}$ satisfies (\ref{eqn_rand_rec_err}) with the required probability. 
This finishes the proof of Theorem \ref{Thm_rand_patterns}.\qed
\endproof
\begin{remark} In the special case of the scattering matrix (\ref{eqnRadar7}), we can apply the same technique as in Remark \ref{remark_ic} (b) to obtain a slight improvement of (\ref{eqnfinal1}). In fact, assuming also the mild condition $8(N-s)/\varepsilon\geq e^7$, all conditions of Theorem \ref{Thm_noisy_cond} are satisfied with probability at least $1-\varepsilon$ under the improved condition
\begin{equation*}
n\geq 5\sqrt{2s}\log^{1/2}\left(8(N-s)/\varepsilon\right)\log\left(576s(N-s)^2/\varepsilon\right).
\end{equation*}
\end{remark}
\end{section}

\begin{section}{Nonuniform Recovery of Scatterers with Deterministic Phase}
\label{sec:deterministic}

\begin{subsection}{Set partitions}\label{section_set_partitions}

To prove the central result of this section, we will require a few facts on certain partitions of the set $[N]$, $N\in \N$. 
As in \cite[Section 2.2]{ra05-7} we define $\mathcal{P}\left(N,k\right)$ as the set of all partitions of $[N]$ into exactly $k$ blocks such that 
each block contains at least two elements. Note that then necessarily $k\leq N/2$. The numbers $S_2(N,k):=\left|\mathcal{P}(N,k)\right|$ are called 
associated Stirling numbers of the second kind. In \cite[Section 3.5]{ra05-7} it was shown that
\begin{equation}\label{eqnA21}
S_2(N,k)\leq\left(\frac{3N}{2}\right)^{N-k}.
\end{equation} 
For our purposes, we will also need partitions of $[N]$ in which not necessarily all blocks contain at least two elements. 
\begin{definition}
For $N\geq 1$, $t\leq k\leq N$, we define $\mathcal{P}\left(N,k,k-t\right)$ as the set of all partitions of $[N]$ into $k$ blocks such that $k-t$ of these blocks contain at least two elements. Moreover, we define $\mathcal{P}_{ex}\left(N,k,k-t\right)$ as the set of all partitions of $[N]$ into $k$ blocks such that exactly $k-t$ blocks contain at least two elements and exactly $t$ blocks contain exactly one element.
\end{definition}

The above definition of $\mathcal{P}\left(N,k,k-t\right)$ implies that necessarily
$2(k-t)\leq N-t$
and therefore 
\begin{equation}\label{eqnA22}
k\leq \frac{N+t}{2}.
\end{equation}
Our next goal is a convenient estimate of the numbers $S\left(N,k,k-t\right):=\left|\mathcal{P}\left(N,k,k-t\right)\right|$. 
We first observe that 
\begin{equation*}
S\left(N,k,k-t\right)=\sum_{r=0}^t\left|\mathcal{P}_{ex}\left(N,k,k-r\right)\right|.
\end{equation*}
Moreover, we have
\begin{equation}
\left|\mathcal{P}_{ex}\left(N,k,k-r\right)\right|=\binom{N}{r} S_2(N-r,k-r)\leq \binom{N}{r} \left(\frac{3N}{2}\right)^{N-k},
\end{equation} 
where the last inequality follows from the estimate (\ref{eqnA21}). Since $t\leq N$  and therefore $\sum_{r=0}^t\binom{N}{r}\leq 2^N$, this yields
\begin{equation}\label{eqnA24}
S\left(N,k,k-t\right)\leq (3N)^N\left(\frac{3N}{2}\right)^{-k}.
\end{equation}
This estimate will become crucial in the next section.

\end{subsection}
\begin{subsection}{Construction of a dual certificate}

We will use combinatorial estimates inspired by the analysis in \cite{carota06,pfra10,ra05-7,cata10} in order to construct
a dual certificate. Hereby, we exploit the estimates on set partitions stated above.
In this way, we will extend the recovery result of Section \ref{sec:random_sign} to a vector $x\in\C^N$ 
with deterministic phase pattern $\sign(x)_T$ -- recall that $T$ is the set of indices corresponding to the $s$ largest absolute entries
of $x$. 
Since the phases are now deterministic we can no longer use the additional concentration of measure 
coming from the independent randomness in the signs. In particular, we have to estimate the probability of the event
\begin{equation*}
\left\{\max_{\ell\in T^c}\left|\langle \tilde{A}_T^{\dag}\tilde{a}_\ell,\sign(x)_T\rangle\right|>\frac{1}{2}\right\}
\end{equation*} 
using only the randomness in $\tilde{A}$.
Throughout this subsection, we will assume that the sampling matrix $A\in\C^{n^2\times N}$ is given by (\ref{eqnRadar7}). 
However, we note that exactly the same proof applies if we take the Fourier system $\{\Phi_k\}$ 
from \cite{ra05-7} instead and construct the random matrix as in \eqref{eqn_BOS_sample}. 

Let us state the central result of this section.
\begin{theorem}\label{thm_nonuniform_deterministic} Let $A\in\C^{n^2\times N}$ be the random sampling matrix from (\ref{eqnRadar7}) and let $x\in\C^N$. 
Let $T\subset[N]$ with $\left|T\right|=s$ be the index set of the $s$ largest absolute entries of $x$. 
Set $\tilde{A}:=\frac{1}{n}A$. 
If 
\begin{equation}\label{eqnA25}
n^2\geq Cs\log^2\left(cN/\varepsilon\right),
\end{equation}
then 
with probability at least $1-\varepsilon$ 
\begin{enumerate}
\item[(i)] there is a $v\in\C^{n^2}$ such that $u:=\tilde{A}^*v$ and $v$ satisfy Conditions (\ref{eqnN4}),(\ref{eqnN5}) and (\ref{eqnN6}) 
of Theorem \ref{Thm_noisy_cond};
\item[(ii)] for the matrix $\tilde{A}$, it holds that 
\begin{equation}\label{eqnD1}
\left\|\tilde{A}_{T}^*\tilde{A}_{T}-\Id\right\|_{2\to 2}\leq\frac{1}{e}.
\end{equation}
\end{enumerate}
The constants satisfy $C\leq\left(800 e^{3/4}\right)^2$, $c\leq 6$.
\end{theorem}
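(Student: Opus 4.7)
The overall strategy is to choose the natural "inverse" dual certificate
\[
v := \tilde{A}_T\bigl(\tilde{A}_T^*\tilde{A}_T\bigr)^{-1}\sign(x)_T,
\]
which by construction satisfies $u_T = (\tilde A^*v)_T = \sign(x)_T$, and then to verify conditions \eqref{eqnN5} and \eqref{eqnN6} on a high-probability event where $\tilde{A}_T$ is well-conditioned. First I would apply Theorem~\ref{central:thm} with $\delta=1/e$ and failure probability $\varepsilon/2$: on the resulting event $E:=\{\|\tilde{A}_T^*\tilde{A}_T-\Id\|_{2\to 2}\le 1/e\}$, claim~(ii) is immediate, and $\sigma_{\min}(\tilde{A}_T)\ge\sqrt{1-1/e}$ gives $\|v\|_2\le\sqrt{s/(1-1/e)}\le\sqrt{2s}$, i.e.\ \eqref{eqnN6}. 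The cost $n^2\gtrsim e^2\, s\log^2(\cdot)$ that this step demands is absorbed into the final condition \eqref{eqnA25}.

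The remaining (and only nontrivial) work is to show $\max_{\ell\in T^c}|u_\ell|\le 1/2$ with probability at least $1-\varepsilon/2$. Setting $H:=\tilde A_T^*\tilde A_T-\Id$, the Neumann expansion on $E$ gives
\[
u_\ell \;=\; \sum_{k=0}^{\infty}(-1)^k\,\bigl\langle \tilde A_T^*\tilde a_\ell,\,H^k\sign(x)_T\bigr\rangle \;=:\; \sum_{k=0}^{\infty} Y_k^{(\ell)},\qquad \ell\in T^c.
\]
The plan is to control each $Y_k^{(\ell)}$ by the moment method: for suitably chosen integers $m_k$, bound $\E\,|Y_k^{(\ell)}|^{2m_k}$, apply Markov's inequality with thresholds $\beta_k$ obeying $\sum_k\beta_k\le 1/2$, and take a union bound over $\ell\in T^c$ and over $k\le K$; the tail $k>K$ is handled deterministically on $E$ via $\|H\|_{2\to 2}^k\le e^{-k}$, so truncation at $K=O(\log(N/\varepsilon))$ is harmless.

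The core technical task is the moment computation. Expanding $Y_k^{(\ell)}$ as a $(k{+}2)$-fold sum over BOS values at the random sample points $b_1,\dots,b_n$, the quantity $|Y_k^{(\ell)}|^{2m_k}$ becomes a sum over roughly $2m_k(k{+}2)$ index tuples; upon taking expectation, the orthonormality of $\{\Phi_j\}$ and of $\{\Phi_j^2\}$ collapses most of these sums into Kronecker-delta constraints, so the resulting expectation reduces to counting the number of index identifications consistent with the constraints. These identifications are precisely parametrized by the set partitions treated in Section~\ref{section_set_partitions}, and the Stirling-type estimate $S(N,k,k-t)\le (3N)^N(3N/2)^{-k}$ in \eqref{eqnA24} is exactly what is needed to bound their cardinality. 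Optimizing $m_k$ should yield a moment bound of the schematic form $\E|Y_k^{(\ell)}|^{2m_k}\lesssim \bigl(C_1 s/n^2\bigr)^{m_k}\, C_2^{\,k}\,\log^{\text{poly}}(N/\varepsilon)$, which after Markov and the union bound collapses to $\varepsilon/2$ precisely under \eqref{eqnA25}.

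The main obstacle is the bookkeeping in the moment computation, and this is exactly what distinguishes the present setting from the standard Rauhut BOS argument of \cite{ra05-7}. Because each row of $\tilde A$ is a \emph{product} $\Phi_\ell(b_j)\Phi_\ell(b_k)$ of two factors keyed to distinct antenna variables, the monomial expansion of $|Y_k^{(\ell)}|^{2m_k}$ contains roughly $4m_k(k{+}2)$ $\Phi$-factors evaluated at the $n$ random samples, with a highly coupled dependency pattern between the row-pair indices and the underlying $b_i$'s. This doubled dependence is what forced the decoupling step in Section~\ref{sec:conditionnumbers}, and here it forces one to split the partition count into a piece that orthogonalizes the single-row indices (using $\{\Phi_\ell\}$) and a piece that orthogonalizes the paired "diagonal" contributions (using $\{\Phi_\ell^2\}$, which is why this hypothesis reappears). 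Once this combinatorial reduction is carried out and estimate \eqref{eqnA24} is invoked, the tracking of constants, the choice of $m_k$, and the final summation over $k$ and $\ell$ are routine and produce the claimed $C\le(800\,e^{3/4})^2$ and $c\le 6$.
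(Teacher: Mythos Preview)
Your proposal follows essentially the same route as the paper: the same exact dual certificate $v=\tilde A_T(\tilde A_T^*\tilde A_T)^{-1}\sign(x)_T$, the same conditioning event $E$ via Theorem~\ref{central:thm}, the same truncated Neumann expansion of $(\tilde A_T^*\tilde A_T)^{-1}$ with the tail killed on $E$, and the same moment method on the truncated terms, evaluated via the set-partition counting of Section~\ref{section_set_partitions}. One correction to your description of the moment step: the collapse to Kronecker deltas in the expectation does \emph{not} follow from orthonormality of $\{\Phi_\ell\}$ and $\{\Phi_\ell^2\}$ alone---higher moments involve expectations of arbitrary products $\E_b\bigl[\prod_i \Phi_{\ell_i}(b)\prod_j\overline{\Phi_{\ell_j'}(b)}\bigr]$, and it is the specific exponential (character) structure of the scattering system \eqref{eqnRadar7} that turns these into $\delta$'s; the paper makes this explicit at the start of the proof of Lemma~\ref{Lemma4}, and the two partitions $\Qpar,\Rpar$ there index the two antenna-position words $(j_h^{(p)})$ and $(m_h^{(p)})$, not a $\Phi$/$\Phi^2$ split.
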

\begin{proof} Suppose we are on the event 
\begin{equation*}
E:=
\left\{\left\|\tilde{A}_{T}^*\tilde{A}_{T}-\Id\right\|_{2\to 2}\leq\frac{1}{e}\right\},
\end{equation*} 
where the constant $1/e$ in the probability is chosen to ease computations later on.  
Theorem \ref{central:thm} implies that $\P\left[E^c\right]\leq\varepsilon/4$ if Condition \eqref{eqnA25} holds.
Our aim is an estimate for the probability of the event 
\begin{equation}\label{eqnA26}
\widetilde{E}:=\left\{\left\|\tilde{A}_{T^c}^*\tilde{A}_T\left(\tilde{A}_T^*\tilde{A}_T\right)^{-1}\sign(x)_T\right\|_\infty> \frac{1}{2}\right\}.
\end{equation} 
By expanding the Neumann series, we observe that, for $m\in \N$,
\begin{equation*}
\left(\Id-\left(\Id-\tilde{A}_T^*\tilde{A}_T\right)^m\right)^{-1}=\Id+\sum_{r=1}^\infty\left(\Id-\tilde{A}_T^*\tilde{A}_T\right)^{rm}.
\end{equation*}
With
\begin{equation*}
A_m:=\sum_{r=1}^\infty\left(\Id-\tilde{A}_T^*\tilde{A}_T\right)^{rm}
\end{equation*}
we obtain
\begin{align*}
\left(\tilde{A}_T^*\tilde{A}_T\right)^{-1}&=\left(\Id-\left(\Id-\tilde{A}_T^*\tilde{A}_T\right)\right)^{-1}
=\left(\Id-\left(\Id-\tilde{A}_T^*\tilde{A}_T\right)^m\right)^{-1}\sum_{k=0}^{m-1}\left(\Id-\tilde{A}_T^*\tilde{A}_T\right)^k\\&=\left(\Id+A_m\right)\sum_{k=0}^{m-1}\left(\Id-\tilde{A}_T^*\tilde{A}_T\right)^k.
\end{align*}
An application to $\sign(x)_T$ yields
\begin{align*}
\tilde{A}_{T^c}^*\tilde{A}_T\left(\tilde{A}_T^*\tilde{A}_T\right)^{-1}\sign(x)_T&=\tilde{A}_{T^c}^*\tilde{A}_T\sum_{k=0}^{m-1}\left(\Id-\tilde{A}_T^*\tilde{A}_T\right)^k\sign(x)_T\\&+\tilde{A}_{T^c}^*\tilde{A}_T A_m\sum_{k=0}^{m-1}\left(\Id-\tilde{A}_T^*\tilde{A}_T\right)^k\sign(x)_T.
\end{align*}
An application of the pigeon hole principle yields
\begin{align}
\P\left(\widetilde{E}\right)&\leq\P\left(\left\|\tilde{A}_{T^c}^*\tilde{A}_T\sum_{k=0}^{m-1}\left(\Id-\tilde{A}_T^*\tilde{A}_T\right)^k\sign(x)_T\right\|_\infty>\frac{1}{4}\right)\label{eqnA27}\\&+\P\left(\left\|\tilde{A}_{T^c}^*\tilde{A}_T A_m\sum_{k=0}^{m-1}\left(\Id-\tilde{A}_T^*\tilde{A}_T\right)^k\sign(x)_T\right\|_\infty>\frac{1}{4}\right).\label{eqnA28}
\end{align}
We now choose \begin{equation}\label{eqn_m_cond}
m:=\left\lceil 2\log\left(6N/\varepsilon\right)\right\rceil.
\end{equation} 
For the treatment of the event 
\begin{equation}\label{def:barE}
\ol{E}:=\left\{\left\|\tilde{A}_{T^c}^*\tilde{A}_T A_m\sum_{k=0}^{m-1}\left(\Id-\tilde{A}_T^*\tilde{A}_T\right)^k\sign(x)_T\right\|_\infty>\frac{1}{4}\right\},
\end{equation}
in (\ref{eqnA28}) we denote by $a_\ell$ the columns 
of the unnormalized sampling matrix $A$ and set 
\begin{equation*}
\mu^2:=\max_{\ell\in T^c,\tilde{\ell}\in T}\left|\langle a_\ell,a_{\tilde{\ell}}\rangle\right|.
\end{equation*}
For a matrix $B\in\C^{m\times k}$, we denote by
\begin{equation*}
\left\|B\right\|_{\infty\to\infty}:=\sup_{\left\|x\right\|_\infty=1}\left\|Bx\right\|_\infty=\max_{\ell\in[m]}\sum_{n\in[k]}|b_{\ell n}|
\end{equation*}
the operator norm of $B$ on $\ell_\infty$.
We then obtain
\begin{equation*}
\left\|\tilde{A}_{T^c}^*\tilde{A}_T\right\|_{\infty\to\infty}\leq\frac{s}{n^2}\mu^2.
\end{equation*}
Moreover, for an arbitrary matrix $B\in\C^{s\times s}$, it follows from the definition of $\left\|\cdot\right\|_{\infty\to\infty}$ that 
$\left\|B\right\|_{\infty\to\infty}\leq\sqrt{s}\left\|B\right\|_{2\to2}$.
Conditionally on the event $E$, this inequality gives 
\begin{align*}
\left\|A_m\right\|_{\infty\to\infty}&\leq\sqrt{s}\left\|A_m\right\|_{2\to 2}
\leq\sqrt{s}\sum_{r=1}^\infty\left\|\left(\Id-\tilde{A}_T^*\tilde{A}_T\right)\right\|_{2\to 2}^{rm} 
\leq\sqrt{s}\sum_{r=1}^\infty\left(\frac{1}{e^{m}}\right)^r
=\sqrt{s}\frac{1}{e^m-1}.
\end{align*}
Similarly, we obtain
\begin{equation*}
\left\|\sum_{k=0}^{m-1}\left(\Id-\tilde{A}_T^*\tilde{A}_T\right)^k\right\|_{\infty\to\infty}\leq\sqrt{s}\frac{e}{e-1}.
\end{equation*}
Combining these estimates, we obtain, conditionally on the event $E$,
\begin{align*}
&\left\|\tilde{A}_{T^c}^*\tilde{A}_T A_m\sum_{k=0}^{m-1}\left(\Id-\tilde{A}_T^*\tilde{A}_T\right)^k\sign(x)_T\right\|_{\infty\to\infty}\\
\leq & \left\|\tilde{A}_{T^c}^*\tilde{A}_T\right\|_{\infty\to\infty}\left\|A_m\right\|_{\infty\to\infty}
\left\|\sum_{k=0}^{m-1}\left(\Id-\tilde{A}_T^*\tilde{A}_T\right)^k\right\|_{\infty\to\infty}\\
\leq &\frac{s^2}{n^2} \frac{e}{(e-1)}\frac{1}{e^m-1}\mu^2\leq 4\frac{s^2}{e^m}\frac{1}{n^2}\mu^2\leq\frac{\varepsilon^2}{9n^2}\mu^2,
\end{align*}
where we have applied (\ref{eqn_m_cond}) and the fact that $s\leq N$ in the last line.
%
Hence, 
the probability of the event $\ol{E}$ in \eqref{def:barE} can be bounded by
\begin{align*}
\P\left(\ol{E}\right)&=\P\left(\ol{E}\cap E\right)+\P\left(\ol{E}\cap E^c\right) 
\leq\P\left(\frac{\varepsilon^2}{9n^2}\mu^2>\frac{1}{4}\right)+\frac{\varepsilon}{4}\\&\leq 4s(N-s)\exp\left(-\frac{9n}{8\varepsilon^2}\right)+\frac{\varepsilon}{4} 
\leq\frac{\varepsilon}{2},
\end{align*}
where we have applied Hoeffding's inequality Theorem \ref{compl_Hoeffding} and the union bound together with (\ref{eqnA25}) in the last line.
It remains to estimate the term in (\ref{eqnA27}). To this end, we define, for  $\ell\in T^c$, 
\begin{equation}\label{eqnA31}
E_\ell:=\left\{\left|\sum_{k=0}^{m-1}\tilde{a}_\ell^*\tilde{A}_T\left(\Id-\tilde{A}_T^*\tilde{A}_T\right)^k\sign(x)_T\right|>\frac{1}{4}\right\}.
\end{equation}
Let $\left\{\beta_k\right\}_{k=0,\ldots,m-1}\subset(0,1)$ such that $\sum_{k=0}^{m-1}\beta_k\leq1/4$ and let $M_k \in \N$ to be chosen below.
According to the pigeon hole principle, 
we have
\begin{align*}
\P\left(E_\ell\right)&\leq\sum_{k=0}^{m-1}\P\left(\left|\tilde{a}_\ell^*\tilde{A}_T\left(\Id-\tilde{A}_T^*\tilde{A}_T\right)^k\sign(x)_T\right|\geq\beta_k\right)\\
&=\sum_{k=0}^{m-1}\P\left(\left|\tilde{a}_\ell^*\tilde{A}_T\left(\Id-\tilde{A}_T^*\tilde{A}_T\right)^k\sign(x)_T\right|^{2M_k}\geq\beta_k^{2M_k}\right)\\
&\leq\sum_{k=0}^{m-1}\beta_k^{-2M_k}\E\left[\left|\tilde{a}_\ell^*\tilde{A}_T\left(\Id-\tilde{A}_T^*\tilde{A}_T\right)^k\sign(x)_T\right|^{2M_k}\right],
\end{align*}
where we have applied Markov's inequality in the last step.
With $r(\cdot)$ denoting the function that rounds to the closest integer,
we introduce
\begin{align*}
M_k:=r\left(\frac{m}{k+1}\right)\text{ for }k=0,\ldots,m-1, \quad
q_k:=2M_k(k+1).
\end{align*}
Then $2m/3\leq M_k(k+1)\leq 4m/3$ and therefore $4m/3\leq q_k\leq 8m/3$ 
and also $m/M_k\geq 3(k+1)/4$. For some $\beta\in(0,1)$, we further set
\begin{equation*}
\beta_k:=\beta^{\frac{m}{M_k}}.
\end{equation*}
Then with $\beta=1/(5^{4/3})$, we have $\sum_{k=0}^{m-1}\beta_k\leq1/4$, so that we have found valid choices for the $\beta_k$. 
The rest of the proof 
is a straightforward consequence of the following statement.
\begin{lemma}\label{Lemma4} Let $k,M\in\N$ be given and set $q=2M(k+1)$.
If
\begin{equation}\label{eqnA38}
n\geq3q\sqrt{s},
\end{equation} 
then
\begin{equation}\label{eqnA29}
\E\left[\left|\tilde{a}_\ell^*\tilde{A}_T\left(\Id-\tilde{A}_T^*\tilde{A}_T\right)^k\sign(x)_T\right|^{2M}\right]\leq6q\left(\frac{6q\sqrt{s}}{n}\right)^q.
\end{equation}
\end{lemma}

Before we prove Lemma \ref{Lemma4}, let us first see how one can deduce Theorem \ref{thm_nonuniform_deterministic} from it.
Condition \eqref{eqnA25} implies
\begin{equation*}
n\geq 800e^{3/4}\sqrt{s}\log\left(\frac{6N}{\varepsilon}\right),
\end{equation*} 
which, according to the choice $m =\left\lceil 2\log\left(6N/\varepsilon\right)\right\rceil $ of $m$ and the definition of $q$ implies (\ref{eqnA38}).
Then \eqref{eqnA29} yields the series of inequalities
\begin{align*}
&\sum_{k=0}^{m-1}\beta_k^{-2M_k}\E\left[\left|\tilde{a}_\ell^*\tilde{A}_T\left(\Id-\tilde{A}_T^*\tilde{A}_T\right)^k\sign(x)_T\right|^{2M_k}\right] \leq\beta^{-2m}\sum_{k=0}^{m-1}6q_k\left(\frac{6q_k\sqrt{s}}{n}\right)^{q_k}\\
\leq & \beta^{-2m}\sum_{k=0}^{m-1}16m\left(\frac{16m\sqrt{s}}{n}\right)^{\frac{4}{3}m} \leq16m^2\left(\frac{16\beta^{-3/2}m\sqrt{s}}{n}\right)^{\frac{4}{3}m}.
\end{align*}
With $E_\ell$ denoting the events from (\ref{eqnA31}), we further obtain, using (\ref{eqnA25}) once more,
\begin{align*}
\sum_{\ell\notin T}\P\left[E_\ell\right]&\leq16(N-s)m^2\left(\frac{16\beta^{-3/2}m\sqrt{s}}{n}\right)^{\frac{4}{3}m} \leq16(N-s)m^2e^{-m}\leq\frac{\varepsilon}{2}.
\end{align*}
This finishes the proof of Theorem \ref{thm_nonuniform_deterministic}.
\end{proof}
What remains is the following\\
{\em Proof of Lemma \ref{Lemma4}}. So far, we have not used that the bounded orthonormal system underlying the random scattering matrix 
has the specific structure defined in (\ref{eqnRadar7}). 
In what follows, we will use the letter $\ell\in\Z^2$, possibly indexed further, 
to denote the rescaled positions (without the distance coordinate) on the resolution grid where the targets can be. We furthermore identify $[N]$ with $[\sqrt{N}]^2$ in the canonical way, thereby recovering the square grid of resolution cells (recall that we set $N:=\lfloor 2L/d_0\rfloor^2$, where $L>0$ is the size of the target domain and $d_0>0$ denotes the meshsize of the resolution grid, so that $\sqrt{N}$ is actually the number of resolution cells along one axis of the square array). We fix $\ell\in T^c$ and set 
 $\ell_0^{(h)}:=\ell$ for $h=1,\ldots,2M$. A lengthy but straightforward calculation gives with $\omega:=2\pi d_0/(\lambda z_0)$
\begin{align}
&\left|\tilde{a}_\ell^*\tilde{A}_T\left(\Id-\tilde{A}_T^*\tilde{A}_T\right)^k\sign(x)_T\right|^{2M}\notag=\frac{1}{n^{4M(k+1)}}\\
&\times\sum_{\substack{j_1^{(1)},\ldots,j_{k+1}^{(1)}=1\\\vdots\\j_1^{(2M)},\ldots,j_{k+1}^{(2M)}=1}}^n\sum_{\substack{m_1^{(1)},\ldots,m_{k+1}^{(1)}=1\\\vdots\\m_1^{(2M)},\ldots,m_{k+1}^{(2M)}=1}}^n\sum_{\substack{\ell_1^{(1)},\ldots,\ell_{k+1}^{(1)}\in T\\\vdots\\\ell_1^{(2M)},\ldots,\ell_{k+1}^{(2M)}\in T\\\ell_h^{(p)}\not=\ell_{h-1}^{(p)},h\in[k+1]}} 
\prod_{t=1}^{M}  \sign\left(x_{\ell_{k+1}^{(2t-1)}}\right)\overline{\sign\left(x_{\ell_{k+1}^{(2t)}}\right)} \nonumber\\
& \quad  \times \exp\left(i\frac{\omega}{2}\sum_{p=1}^{2M}(-1)^p\left\|\ell_{k+1}^{(p)}\right\|_2^2\right) 
\;\exp\left(i\omega\sum_{p=1}^{2M}(-1)^p\sum_{h=1}^{k+1}\left\langle\left(\ell_{h-1}^{(p)}-\ell_h^{(p)}\right),b_{j_h^{(p)}}\right\rangle\right)\nonumber\\
& \quad \times\exp\left(i\omega\sum_{p=1}^{2M}(-1)^p\sum_{h=1}^{k+1}\left\langle\left(\ell_{h-1}^{(p)}-\ell_h^{(p)}\right),b_{m_h^{(p)}}\right\rangle\right).\label{eqnA33}
\end{align}
In order to evaluate the above term, we will use combinatorial arguments
inspired by \cite{carota06,ra05-7}. To a given word
$\left(j_h^{(p)}\right)_{h=1,\ldots,k+1}^{p=1,\ldots,2M}$ we associate the
partition $\mathcal{Q}$ of $[k+1]\times[2M]$ with the property that $(h,p)$
and $(h',p')$ are in the same block if and only if
$j_{h}^{(p)}=j_{h'}^{(p')}$. Analogously, we associate the partition
$\mathcal{R}$ to the word
$\left(m_h^{(p)}\right)_{h=1,\ldots,k+1}^{p=1,\ldots,2M}$. 
To each $Q\in\mathcal{Q}$ resp.\ $R\in\mathcal{R}$ there exists exactly one
$j_Q\in\left\{1,\ldots,n\right\}$ resp.\ $m_R\in\left\{1,\ldots,n\right\}$
such that $j_{h}^{(p)}=j_Q$ for all $(h,p)\in Q$ resp.\ $m_{h}^{(p)}=m_R$ for all $(h,p)\in R$. 
We define
\begin{align*}
\Qpar\cap\Rpar&:=\left\{(Q,R)\in\Qpar\times\Rpar : j_Q=m_R\right\},\\
\Qpar^\cap&:=\left\{Q\in\Qpar : \text{there exists } R=R(Q)\in\Rpar \text{ such that }m_{R(Q)}=j_Q\right\},\\
\Rpar^\cap&:=\left\{R\in\Rpar : \text{there exists }Q=Q(R)\in\Qpar \text{ such that }j_{Q(R)}=m_R\right\}.
\end{align*}
With this notation, we can write
\begin{align*}
&\E\left[\exp\left(i\omega\sum_{p=1}^{2M}(-1)^p\sum_{h=1}^{k+1}\left\langle\left(\ell_{h-1}^{(p)}-\ell_h^{(p)}\right),b_{j_h^{(p)}}\right\rangle\right)\right.\\
& \phantom{\E((}\left. \times \exp\left(i\omega\sum_{p=1}^{2M}(-1)^p\sum_{h=1}^{k+1}\left\langle\left(\ell_{h-1}^{(p)}-\ell_h^{(p)}\right),b_{m_h^{(p)}}\right\rangle\right)\right]\\
= &\E\left[\prod_{Q\in\Qpar\setminus\Qpar^\cap}\exp\left(i\omega\left\langle\sum_{(h,p)\in Q}(-1)^p\left(\ell_{h-1}^{(p)}-\ell_h^{(p)}\right),b_{j_Q}\right\rangle\right)\right]\\
\times&\E\left[\prod_{R\in\Rpar\setminus\Rpar^\cap}\exp\left(i\omega\left\langle\sum_{(h,p)\in R}(-1)^p\left(\ell_{h-1}^{(p)}-\ell_h^{(p)}\right),b_{m_R}\right\rangle\right)\right]\\
\times &\E\left[\prod_{Q\in\Qpar^\cap}\exp\left(i\omega\left\langle\sum_{(h,p)\in Q}(-1)^p\left(\ell_{h-1}^{(p)}-\ell_h^{(p)}\right),b_{j_Q}\right\rangle \right.\right.\\
& \phantom{\times\E[[\prod_{Q\in\Qpar^\cap}\exp((}\left.\left. +i\omega\left\langle\sum_{(h,p)\in R(Q)}(-1)^p\left(\ell_{h-1}^{(p)}-\ell_h^{(p)}\right),b_{m_{R(Q)}}\right\rangle\right)\right].
\end{align*}
Observe that
\begin{align*}
&\E\left[\prod_{Q\in\Qpar\setminus\Qpar^\cap}\exp\left(i\omega\left\langle\sum_{(h,p)\in Q}(-1)^p\left(\ell_{h-1}^{(p)}-\ell_h^{(p)}\right),b_{j_Q}\right\rangle\right)\right]\\&=\prod_{Q\in\Qpar\setminus\Qpar^\cap}\delta\left(\sum_{(h,p)\in Q}(-1)^p\left(\ell_{h-1}^{(p)}-\ell_h^{(p)}\right)\right),
\end{align*}  
where $\delta$ is the Kronecker delta, that is $\delta(0)=1$ and $\delta(x)=0$ for $x\not= 0$.
Since $\ell_h^{(p)}\not=\ell_{h-1}^{p}$, this implies that each $Q\in\Qpar\setminus\Qpar^\cap$ must contain at least two elements in order to provide a nonzero contribution 
to the overall expectation of the expression in \eqref{eqnA33}. 
The same is true for each $R\in\Rpar\setminus\Rpar^\cap$. However, the blocks $Q\in\Qpar^\cap$ may contain just one element, since they have a 
corresponding block $R(Q)$ with matching index. Therefore, we can break
the evaluation of the right hand side of (\ref{eqnA33}) down to three basic questions.
\begin{enumerate}
\item[1.] What are the numbers $t_1$ resp.\ $t_2$ of the distinct indices
appearing in the words
$w_1:=\left(j_h^{(p)}\right)_{h=1,\ldots,k+1}^{p=1,\ldots,2M}$ resp.\ $w_2:=\left(m_h^{(p)}\right)_{h=1,\ldots,k+1}^{p=1,\ldots,2M}$?
\item[2.] What is the number $t$ of indices that the words $w_1$ and $w_2$ have in common?
\item[3.] Given $1.$ and $2.$, which constraints must be fulfilled by the partitions $\Qpar$ and $\Rpar$ corresponding to $w_1$ and $w_2$? 
\end{enumerate}
In the following, we identify partitions of $[k+1]\times[2M]$ with partitions of $[2M(k+1)]$ in the canonical way. 
Moreover, if we have a partition $\Qpar=\left\{Q_1,\ldots,Q_t,Q_{t+1},\ldots,Q_{t_1}\right\}$, we enumerate it without loss of generality 
such that $Q_{t+1},\ldots,Q_{t_1}$ are the blocks containing at least two elements and $Q_1,\ldots,Q_t$ are the blocks which might contain just one element. The same is done for the partition $\Rpar=\left\{R_1,\ldots,R_t,R_{t+1},\ldots,R_{t_2}\right\}$. We define
\begin{equation*}
{\cal E} :=\E\left[\left|\tilde{a}_\ell^*\tilde{A}_T\left(\Id-\tilde{A}_T^*\tilde{A}_T\right)^k\sign(x)_T\right|^{2M}\right].
\end{equation*}
Using the triangle inequality and $n>2M(k+1)$ implied by (\ref{eqnA38}) together with the definitions from Subsection \ref{section_set_partitions} we obtain
\begin{align}
{\cal E} &\leq\frac{1}{n^{4M(k+1)}}\sum_{t=0}^{2M(k+1)}\sum_{t_1=t}^{M(k+1)+\left\lfloor t/2\right\rfloor}\sum_{t_2=t}^{M(k+1)+\left\lfloor t/2\right\rfloor}\sum_{\substack{j_1,\ldots,j_{t_1}\text{ pw different }\\m_1,\ldots,m_{t_2}\text{ pw different }\\\left|\left\{j_1,\ldots,j_{t_1}\right\}\cap\left\{m_1,\ldots,m_{t_2}\right\}\right|=t}}\nonumber\\&\sum_{\Qpar\in\mathcal{P}\left(2M(k+1),t_1,t_1-t\right)}\nonumber\sum_{\Rpar\in\mathcal{P}\left(2M(k+1),t_2,t_2-t\right)}\sum_{\substack{\ell_1^{(1)},\ldots,\ell_{k+1}^{(1)}\in T\\\vdots\\\ell_1^{(2M)},\ldots,\ell_{k+1}^{(2M)}\in T\\\ell_h^{(p)}\not=\ell_{h-1}^{p},h\in[k+1]}}\nonumber\\&\prod_{Q\in\left\{Q_{t+1},\ldots,Q_{t_1}\right\}}\delta\left(\sum_{(h,p)\in Q}(-1)^p\left(\ell_{h-1}^{(p)}-\ell_h^{(p)}\right)\right)\label{eqnA34}\\
&\times\prod_{R\in\left\{R_{t+1},\ldots,R_{t_2}\right\}}\delta\left(\sum_{(h,p)\in R}(-1)^p\left(\ell_{h-1}^{(p)}-\ell_h^{(p)}\right)\right)\label{eqnA35}\\
&\times\prod_{j=1}^t\delta\left(\sum_{(h,p)\in Q_j}(-1)^p\left(\ell_{h-1}^{(p)}-\ell_h^{(p)}\right)+\sum_{(h,p)\in R_j}(-1)^p\left(\ell_{h-1}^{(p)}-\ell_h^{(p)}\right)\right).\label{eqnA36}
\end{align}
For the product $\prod_{Q\in\left\{Q_{t+1},\ldots,Q_{t_1}\right\}}\delta\left(\sum_{(h,p)\in Q}(-1)^p\left(\ell_{h-1}^{(p)}-\ell_h^{(p)}\right)\right)$
to be nonzero, we must have 
$
\sum_{(h,p)\in Q}(-1)^p\left(\ell_{h-1}^{(p)}-\ell_h^{(p)}\right)=0\text{ for all }Q\in\left\{Q_{t+1},\ldots,Q_{t_1}\right\},
$ 
and analogously for the other two products appearing in (\ref{eqnA35}), (\ref{eqnA36}).
Therefore, the expressions (\ref{eqnA34})-(\ref{eqnA36}) give at least $t_1\vee t_2:=\max\{t_1,t_2\}$ linearly independent constraints. Recalling that $q=2M(k+1)$, this observation yields
\begin{align*}
&\sum_{\substack{\ell_1^{(1)},\ldots,\ell_{k+1}^{(1)}\in T\\\vdots\\\ell_1^{(2M)},\ldots,\ell_{k+1}^{(2M)}\in T\\\ell_h^{(p)}\not=\ell_{h-1}^{p},h\in[k+1]}}\prod_{Q\in\left\{Q_{t+1},\ldots,Q_{t_1}\right\}}\delta\left(\sum_{(h,p)\in Q}(-1)^p\left(\ell_{h-1}^{(p)}-\ell_h^{(p)}\right)\right)
\\&\times\prod_{R\in\left\{R_{t+1},\ldots,R_{t_2}\right\}}\delta\left(\sum_{(h,p)\in R}(-1)^p\left(\ell_{h-1}^{(p)}-\ell_h^{(p)}\right)\right)\\
&\times\prod_{j=1}^t\delta\left(\sum_{(h,p)\in Q_j}(-1)^p\left(\ell_{h-1}^{(p)}-\ell_h^{(p)}\right)+\sum_{(h,p)\in R_j}(-1)^p\left(\ell_{h-1}^{(p)}-\ell_h^{(p)}\right)\right)\leq s^{q-t_1\vee t_2}.
\end{align*}
Using (\ref{eqnA24}), we  arrive at
\begin{align*}
&\sum_{\substack{j_1,\ldots,j_{t_1}\text{ pw different }\\m_1,\ldots,m_{t_2}\text{ pw different }\\\left|\left\{j_1,\ldots,j_{t_1}\right\}\cap\left\{m_1,\ldots,m_{t_2}\right\}\right|=t}}\sum_{\Qpar\in\mathcal{P}\left(2M(k+1),t_1,t_1-t\right)}\sum_{\Rpar\in\mathcal{P}\left(2M(k+1),t_2,t_2-t\right)}\sum_{\substack{\ell_1^{(1)},\ldots,\ell_{k+1}^{(1)}\in T\\\vdots\\\ell_1^{(2M)},\ldots,\ell_{k+1}^{(2M)}\in T\\\ell_h^{(p)}\not=\ell_{h-1}^{p},h\in[k+1]}}
\\&\phantom{\times}\prod_{Q\in\left\{Q_{t+1},\ldots,Q_{t_1}\right\}}\delta\left(\sum_{(h,p)\in Q}(-1)^p\left(\ell_{h-1}^{(p)}-\ell_h^{(p)}\right)\right)
\\&\times\prod_{R\in\left\{R_{t+1},\ldots,R_{t_2}\right\}}\delta\left(\sum_{(h,p)\in R}(-1)^p\left(\ell_{h-1}^{(p)}-\ell_h^{(p)}\right)\right)\\
&\times\prod_{j=1}^t\delta\left(\sum_{(h,p)\in Q_j}(-1)^p\left(\ell_{h-1}^{(p)}-\ell_h^{(p)}\right)+\sum_{(h,p)\in R_j}(-1)^p\left(\ell_{h-1}^{(p)}-\ell_h^{(p)}\right)\right)\\
&\leq\sum_{\substack{j_1,\ldots,j_{t_1}\text{ pw different }\\m_1,\ldots,m_{t_2}\text{ pw different }\\\left|\left\{j_1,\ldots,j_{t_1}\right\}\cap\left\{m_1,\ldots,m_{t_2}\right\}\right|=t}}(9q^2)^q\left(\frac{3q}{2}\right)^{-t_1-t_2}s^{q-t_1\vee t_2}\\
&\leq \binom{n}{t_1}\binom{t_1}{t}\binom{n-t_1}{t_2-t}(9q^2)^q\left(\frac{3q}{2}\right)^{-t_1-t_2}s^{q-t_1\vee t_2}
\\&\leq n^{t_1}t_1^tn^{t_2-t}(9q^2)^q\left(\frac{3q}{2}\right)^{-t_1-t_2}s^{q-t_1\vee t_2}\notag\\
&\leq (9q^2s)^q\left(\frac{q}{n}\right)^t\left(\frac{n}{\frac{3}{2}q}\right)^{t_1+t_2}s^{-t_1\vee t_2},\notag
\end{align*}
where we have applied $t_1 \leq q$ in the last step. Putting these pieces together, we obtain
\begin{align}
{\cal E}&\leq\left(\frac{9q^2s}{n^2}\right)^q\sum_{t=0}^q\left(\frac{q}{n}\right)^t\sum_{t_1=t}^{q/2+\left\lfloor t/2\right\rfloor }\sum_{t_2=t}^{q/2+\left\lfloor t/2\right\rfloor }\left(\frac{n}{\frac{3}{2}q}\right)^{t_1+t_2}s^{-t_1\vee t_2}\nonumber\\
&=\left(\frac{9q^2s}{n^2}\right)^q\sum_{t=0}^q\left(\frac{q}{n}\right)^t\sum_{t_1=t}^{q/2+\left\lfloor t/2\right\rfloor }\left(\sum_{t_2=t}^{t_1-1}\left(\frac{n}{\frac{3}{2}q}\right)^{t_1+t_2}s^{-t_1}+\sum_{t_2=t_1}^{q/2+\left\lfloor t/2\right\rfloor}\left(\frac{n}{\frac{3}{2}q}\right)^{t_1+t_2}s^{-t_2}\right).\label{eqnA37}
\end{align}
Let us evaluate the inner sums in (\ref{eqnA37}). Since $n \geq (3/2) q$ by (\ref{eqnA38}) we have 
\begin{align*}
&\sum_{t_2=t_1}^{q/2+\left\lfloor t/2\right\rfloor}\left(\frac{n}{\frac{3}{2}q}\right)^{t_1+t_2}s^{-t_2}=\left(\frac{n}{\frac{3}{2}q}\right)^{t_1}\sum_{t_2=t_1}^{q/2+\left\lfloor t/2\right\rfloor}\left(\frac{n}{\frac{3}{2}qs}\right)^{t_2}\\
& = \left(\frac{n^2}{\left(\frac{3}{2}q\right)^2s}\right)^{t_1}\sum_{t_2=0}^{q/2+\left\lfloor t/2\right\rfloor-t_1}\left(\frac{n}{\frac{3}{2}qs}\right)^{t_2} 
\leq 2\left(\frac{n^2}{\left(\frac{3}{2}q\right)^2s}\right)^{q/2+t/2}.
\end{align*}
Similarly, 
using once more \eqref{eqnA38} in the form $n\geq (3/2)q\sqrt{s}$, we obtain
\begin{equation*}
\sum_{t_2=t}^{t_1-1}\left(\frac{n}{\frac{3}{2}q}\right)^{t_1+t_2}s^{-t_1}\leq\left(\frac{n^2}{\left(\frac{3}{2}q\right)^2s}\right)^{t_1}
\end{equation*}
and 
\begin{equation*}
\sum_{t_1=t}^{q/2+\left\lfloor t/2\right\rfloor}\left(\frac{n^2}{\left(\frac{3}{2}q\right)^2s}\right)^{t_1}\leq 2\left(\frac{n^2}{\left(\frac{3}{2}q\right)^2s}\right)^{q/2+t/2}.
\end{equation*}
Plugging everything into (\ref{eqnA37}) finishes the proof of the lemma.\qed
\endproof
\end{subsection}
\begin{subsection}{Proof of Theorem \ref{Thm_det_patterns}} 
{\em} According to Theorem \ref{thm_nonuniform_deterministic}, all conditions of Theorem \ref{Thm_noisy_cond} are satisfied with probability at least $1-\varepsilon$ provided 
\begin{equation*}
n^2\geq Cs\log^2\left(cN/\varepsilon\right),
\end{equation*}
where $C,c>0$ are numerical constants satisfying the bounds claimed in Theorem \ref{Thm_det_patterns}. This concludes the proof.\qed
\endproof
\end{subsection}
\end{section}

\section{Numerical simulations}
\label{sec:numerics}
\begin{subsection}{Chambolle and Pock's iterative primal dual algorithm}
For the numerical simulations, we use Chambolle and Pock's primal dual
algorithm~\cite{chpo11} to compute the solution of~\eqref{eq:l1min}
and~\eqref{eqnN2}. The algorithm is suited for a general convex optimization problem of the form 
\begin{equation}\label{FG:problem}
\min_{x\in\C^N}F(Ax)+G(x)
\end{equation}
with $A\in\C^{m\times N}$, $F:\C^m\rightarrow (-\infty,\infty]$ and $G:\C^N\rightarrow (-\infty,\infty]$ lower semi-continuous convex functions. 
The dual problem to (\ref{FG:problem}) is given by
\begin{equation}\label{FG:dual:problem}
\max_{\xi\in\C^m}-F^*(\xi)-G^*(-A^*\xi),
\end{equation}
where $F^*$, $G^*$ denote the convex conjugates of $F,G$. Here, we recall that the convex conjugate function $F^*:\C^m\rightarrow (-\infty,\infty]$ is defined as
\begin{equation*}
F^*(y):=\sup_{x\in\C^m}\left\{\Re\left(\langle x, y\rangle\right)-F(x)\right\}.
\end{equation*}
In the cases of interest to us, strong duality holds, meaning that the optimal values of (\ref{FG:problem}) and (\ref{FG:dual:problem}) coincide.
For describing Chambolle and Pock's algorithm, we require the proximal mappings of $F$ and $G$ defined as
\begin{equation*}
P_G(\tau;z):=\argmin_{x\in\C^N}\left\{\tau G(x)+\frac{1}{2}\left\|x-z\right\|_2^2\right\},
\end{equation*}
and analogously for $F$. The iterative primal dual algorithm then reads as follows. We select parameters $\theta \in [0,1]$, $\tau,\sigma > 0$ 
such that $\tau \sigma \|\textbf{A}\|_{2 \to 2} < 1$ and initial vectors $x^0 \in \C^N,\xi^0 \in \C^m$, $\bar{x}^0 = x^0$. Then one iteratively computes
\begin{align*}
\xi^{n+1} & := P_{F^*}(\sigma; \xi^{n}+\sigma A \bar{x}^{n})\;,\\
x^{n+1} & := P_{G}(\tau; x^n - \tau A^*\xi^{n+1})\;,\\
\bar{x}^{n+1} &:= x^{n+1} + \theta(x^{n+1}- x^{n})\;.
\end{align*}
In \cite{chpo11}, it is shown that for the parameter choice $\theta = 1$ the algorithm converges in the sense that 
$x^n$ converges to the minimizer of the primal problem \eqref{FG:problem}
and $\xi^n$ converges to the solution of
the dual problem \eqref{FG:dual:problem} as $n$ tends to $\infty$. Moreover, \cite{chpo11} also gives an estimate of the convergence
rate for a partial primal dual gap.


\bigskip
\end{subsection}
\begin{subsection}{The algorithm for $\ell_1$-minimization}
Let us now specialize to the case of $\ell_1$-minimization. We remark that
to the best of our knowledge, Chambolle and Pock's algorithm
has not yet been specialized to equality-constrained and noise-constrained
$\ell_1$-minimization before, so we provide the
first numerical tests of the algorithm in this setup. 

Let us first consider the problem 
\begin{equation*}
\min_{x\in\C^N}\left\|x\right\|_1 \text{ subject to } Ax =y.
\end{equation*}
This is a special case of (\ref{FG:problem}) with $G(x)=\left\|x\right\|_1$ and $F(z)=0$ if $z=y$ and $\infty$ otherwise. Straightforward computations show that for all $\xi \in \C^m$, $\zeta\in\C^N$,
\begin{align*}
F^*(\xi) & = \Re(\langle \xi,y\rangle),\qquad
G^*(\zeta)  = \chi_{B_{\|\cdot\|_\infty}}(\zeta)= \left\{ \begin{array}{ll} 0 & \mbox{ if } \|\zeta\|_\infty \leq 1,\\
\infty & \mbox{ otherwise },\end{array}\right.\\
P_F(\sigma;\xi) & = y, \qquad\qquad
P_{F^*}(\sigma;\xi)  = \xi - \sigma y.
\end{align*}
The proximal mapping of $G(x)=\left\|x\right\|_1$ can be evaluated coordinatewise, so that it is enough to compute the proximal of the modulus function $\left|\cdot\right|$ on $\C$. The latter is given by the well-known soft-thresholding operator $S_{\tau}$ defined as
\begin{align}
 S_{\tau}(z):=P_{|\cdot|}(\tau,z) 
& = \argmin_{x \in \C}\left\{ \frac{1}{2}|x-z|^2 + \tau |x| \right\}   
 = \left\{ \begin{array}{ll} \sgn(z)(|z|-\tau)& \mbox{ if } |z| \geq \tau\;,\\
 0 & \mbox{ otherwise}\;,\end{array}\right.\notag
\end{align}
so that 
\begin{equation}\label{def:soft:thresh}
P_G(\tau,z)_\ell = S_{\tau}(z_\ell), \quad \ell \in [N]\;.
\end{equation}
With these computations at hand, the algorithm for noise-free $\ell_1$-minimization is given by the iterations
\begin{align}
\xi^{n+1} & = \xi^n + \sigma (A \bar{x}^n - y)\;,\notag\\
x^{n+1} & = {\mathcal S}_{\tau}(x^n - \tau A^*\xi^{n+1})\;,\notag\\
\bar{x}^{n+1} & = x^{n+1} + \theta(x^{n+1}-x^{n})\;.\notag
\end{align}
In the noisy case, we aim at solving
\begin{equation*}
\min_{x\in\C^N}\left\|x\right\|_1 \text{ subject to } \left\|Ax-y\right\|_2\leq\eta.
\end{equation*}
In this setup, $G(x)=\left\|x\right\|_1$ and 
\begin{equation*}
F(z) = \chi_{B(y,\eta)}(z) = \left\{ \begin{array}{ll} 0 & \mbox{ if } \|z - y\|_2 \leq \eta \;,\\
\infty & \mbox{ otherwise }.\end{array} \right.
\end{equation*}
Carrying out analogous computations as in the noise-free case, we find that the corresponding algorithm for the noisy case 
consists in iteratively computing
\begin{align}
\xi^{n+1} & = \left\{ \begin{array}{l} 0 \quad \mbox{ if } \| \sigma^{-1} \xi^{n} + A \bar{x}^n - y\|_2 \leq \eta\;,\\
\big(1-\dfrac{\eta\sigma}{\|\xi^n + \sigma(A \bar{x}^n-y)\|_2}\big) (\xi^{n} + \sigma(A \bar{x}^n -y)) \;\; \mbox{ otherwise },
\end{array} \right.\notag\\
x^{n+1} & = {\mathcal S}_{\tau}(x^n - \tau A^*\xi^{n+1})\;,\notag\\
\bar{x}^{n+1} & = x^{n+1} + \theta(x^{n+1}-x^{n})\;.\notag
\end{align}
\end{subsection}
\begin{subsection}{Numerical results}
We apply the above algorithm for $\ell_1$-minimization to the sensing matrices given by (\ref{eqnRadar7}). We choose the wavelength $\lambda = 0.03\,m$, the resolution $d_0=10\,m$, the distance $z_0=10000\,m$ 
and the size of the aperture $B=30\,m$. Note that in this scenario, we have $d_0B/(\lambda z_0)=1$. 
To speed up the algorithm, we exploit the fact that the matrix 
$A\in\C^{n^2\times N}$ from (\ref{eqnRadar7}) can be factorized into a product of diagonal matrices and a nonequispaced Fourier matrix.
In fact, assuming a square resolution grid, we can write the grid parameter as double index $(\ell,\tilde{\ell})$ with $\ell, \widetilde{\ell} \in [N_1]$ where $N_1^2 = N$. 
For $j,k\in[n]$ and $a_j=(\xi_j,\eta_j)$, $a_k=(\xi_k,\eta_k)$ we then have
\begin{align*}
(Az)_{jk}&=\exp\left(\frac{\pi i}{\lambda z_0}\left(\left\|(\xi_j,\eta_j)\right\|_2^2+\left\|(\xi_k,\eta_k)\right\|_2^2\right)\right)\\
&\sum_{\ell,\tilde{\ell}\in[N_1]}\exp\left(-2\pi i\left\langle(\ell,\tilde{\ell}),\left(\frac{\xi_j+\xi_k}{B},\frac{\eta_j+\eta_k}{B}\right)\right\rangle\right)\exp\left(\frac{2\pi id_0}{\lambda z_0}\left(\ell^2+\tilde{\ell}^2\right)\right)\tilde{z}_{(\ell,\tilde{\ell})}.
\end{align*} 
Since the nonequispaced Fourier transform can be implemented at
computational costs that are only slightly larger than that of the Fast
Fourier Transform, it gives rise to fast approximate matrix-vector 
multiplication algorithms, see \cite{postta01} 
and reference therein. We use an implementation of S.~Kunis,
which can be found in the Matlab toolbox associated to the paper
\cite{kura06}. 
The algorithm is run with the renormalized matrix $\tilde{A}=\frac{1}{\sqrt{N}}A$ and the parameter choices $\theta =1$,
$\sigma=1$ and $\tau=0.5$. 
For fixed sparsity $s$, we generate a random vector in the following way: We choose the support set uniformly at random, then we sample a Steinhaus vector on this support and multiply its nonzero entries independently by a dynamic range coefficient uniformly distributed on $[1,10]$. With a fixed number of resolution cells, we vary the number $n$ of antennas and compute empirical recovery rates 
by choosing the $n$ antenna positions uniformly at random from the domain $[-B/2,B/2]^2$, where we leave the vector to be recovered fixed for the whole period. 
\begin{figure}[H]
\begin{center}
  \subfigure[]{\includegraphics[width=15cm,height=6cm]{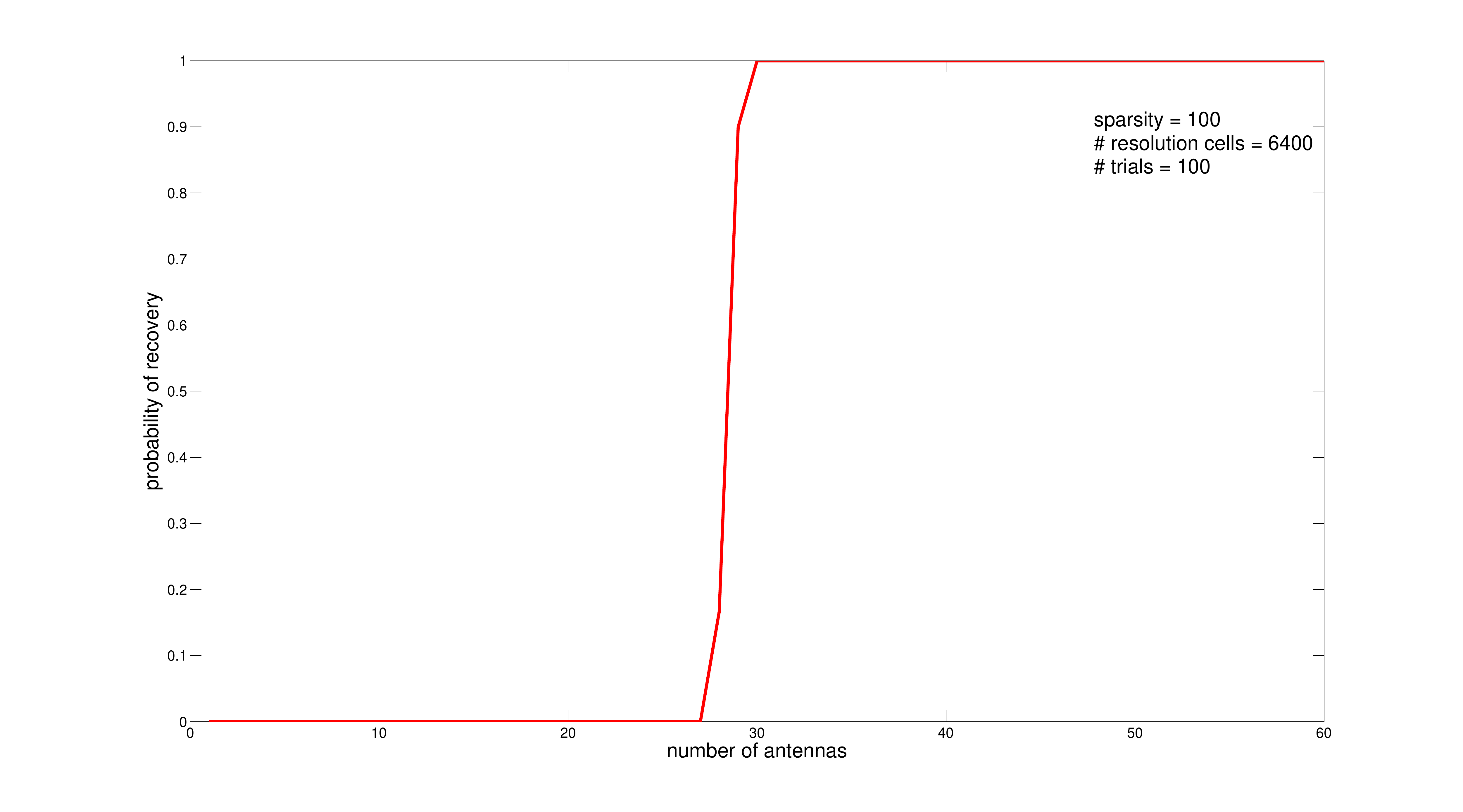}}
\subfigure[]{\includegraphics[width=15cm,height=6cm]{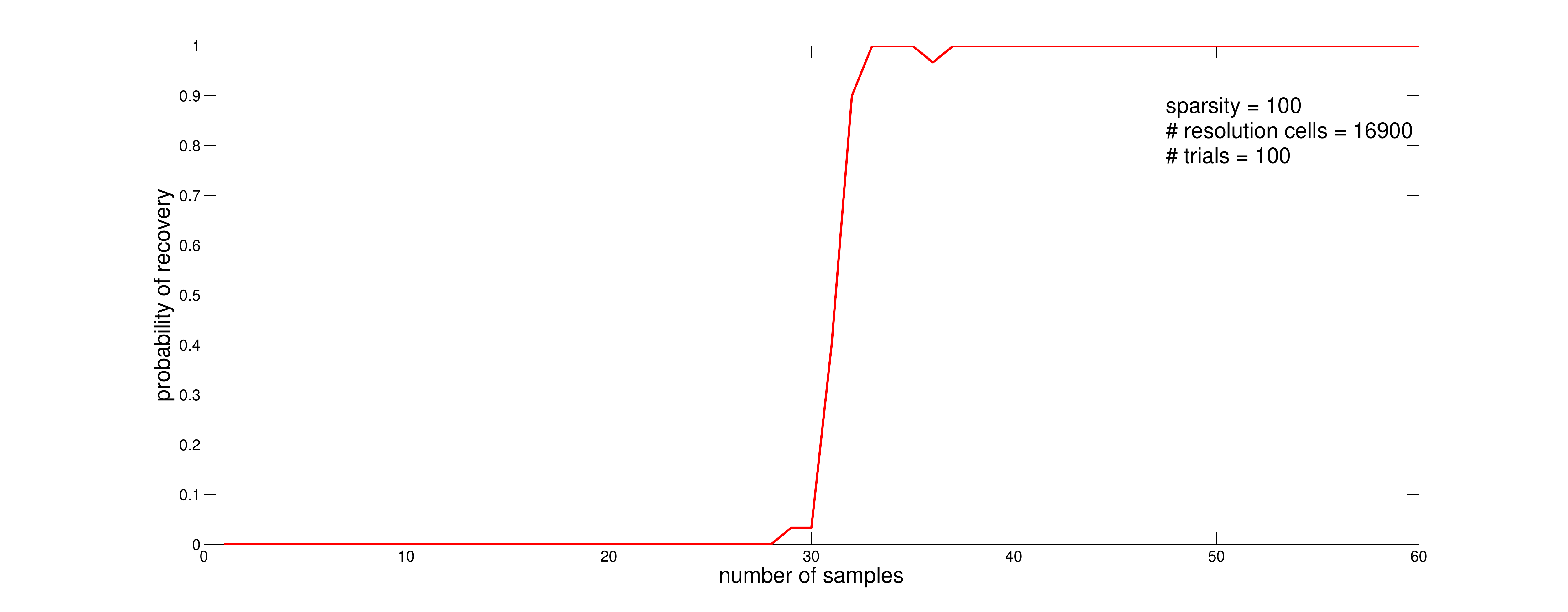}}
\end{center}
     \caption{Empirical recovery rates for fixed sparsity $s=100$ and
varying number $n$ of antennas: (a) $N=6400$ resolution cells (b)
$N=16900$ resolution cells}
     \label{figphase_transition_1}
\end{figure}
With the resulting noise-free measurement vector 
$y$ we compute the $\ell_1$-minimizer with Chambolle and Pock's algorithm (which takes about $300$ iterations), and 
we record whether the original vector is recovered (up to numerical errors of at most $10^{-3}$ measured in the $\ell_2$-norm). Repeating this test $100$ times for each choice of parameters $(s,n,N)$ provides
an empirical estimate of the success probability.
In Figure \ref{figphase_transition_1}, we display the 
result of noiseless recovery for fixed sparsity $s=100$ and for $N=6400$
respectively $N=16900$ resolution cells.
The transition from the unsuccessful regime to the successful regime occurs
at about $28$ antennas, corresponding to $784$ measurements, for $N=6400$, so in practice, the algorithm 
works even better than predicted by our theoretical results. In the situation with more resolution cells, the transition 
occurs at a slightly increased number of antennas.
The illustration in Figure \ref{figphase_transition_1} 
was produced with the version of the algorithm for equality constrained $\ell_1$-minimization.
\\To test the robustness of our recovery scheme with respect to noise, we
compute receiver operating characteristic curves for various parameter
choices, see \cite[Chapter 6]{richards05} 
and \cite[Chapter II.D]{poor94}, 
using the noise-constrained version of Chambolle and Pock's algorithm
algorithm. We start by simulating a target vector $x\in\C^{6400}$ with $\left\|x\right\|_0=100$, that is we simulate $100$ targets in $6400$ resolution cells. We do this as described above, that is we select the support uniformly at random, then simulate random phases on the support and multiply them independently by a dynamic range coefficient uniformly distributed on $[1,10]$. We then leave the vector $x$ fixed, draw a realization of our random scattering matrix $A$ and run noise constrained basis pursuit with the noisy measurements $y=Ax+e$, where $e$ is a complex Gaussian noise vector. The entries of the recovered solution $\hat{x}$  are then compared to a threshold $\tau>0$. If $\left|\hat{x}_k\right|< \tau$, then it is set to zero, otherwise it remains unchanged. We then count how many of the actual targets in $x$ are detected. The detection probability is the number of detections divided by the true number of targets, in our case $100$. Moreover, we count the number of false alarms, that is the number of positions $k\in[6400]$ where $\hat{x}_k\not= 0$ but $x_k=0$. The false alarm probability is the number of false alarms divided by the number of scatterers. For fixed $x$ and $\tau$, we repeat this a $100$ times and compute the empirical probability of detection $P_d$ and the probability of false alarm $P_f$. This is then again repeated for varying values of the threshold $\tau$, resulting in a plot of $P_d$ versus $P_f$, which is called the receiver operating characteristic curve.
\begin{figure}[H]
\centering
   \includegraphics[width=15cm,height=6cm]{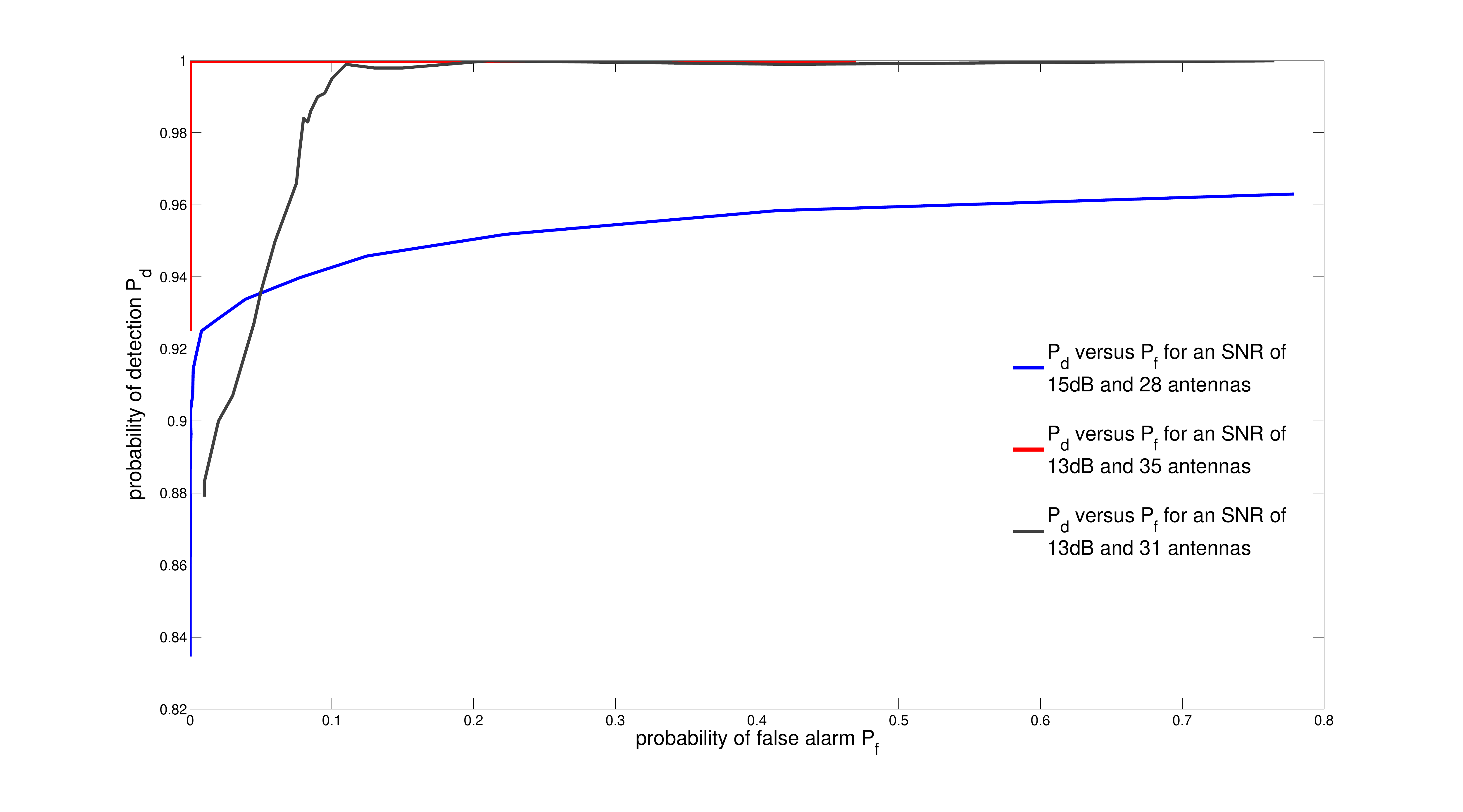}
   \caption{ROC-curves for a fixed $100$-sparse vector $x$ in $6400$ resolution cells}
\label{ROC-curve}
\end{figure} 
In Figure \ref{ROC-curve}, the results of the simulation are depicted.
We see that if we choose the number of antennas at the critical value $28$ observed in Figure \ref{figphase_transition_1}, then we get a significant number of missed targets and false alarms. If we however slightly increase the number of antennas, we get almost perfect detection and virtually no false alarms if we choose the threshold correctly, in our case as $\tau\approx 0.5$. So our recovery scheme is in fact very robust with respect to noise in the sense that the support is very well recovered. 
However, the quality of the approximation of the true reflectivities decreases with the SNR, as is to be expected.
\end{subsection}

\end{document}